\newcommand*\ie{i.\kern.1em e.,\ }
\newcommand*\eg{e.\kern.1em g.,\ }
\newcommand*\cf{c.\kern.1em f.\ }
\newcommand*\almev{a.\kern.1em e.\ }
\definecolor{ama-iro}{RGB}{0, 158, 243.0}
\definecolor{fuyu-gaki}{RGB}{251, 74, 52}
\definecolor{momiji}{RGB}{245, 70, 111}
\definecolor{hotaru-bi}{RGB}{229,221,58} % yellowish
\definecolor{kon-peki}{RGB}{1,120,217}
\definecolor{shin-kai}{RGB}{77,98,152}
\definecolor{shin-ryoku}{RGB}{1,145,97}
\definecolor{yama-budo}{RGB}{171,14,122}
\definecolor{citecolor}{HTML}{5E9100}
\definecolor{theoremcolor}{HTML}{FFE1D9}
\definecolor{resultcolor}{HTML}{FFE1D9}
\definecolor{constraintcolor}{HTML}{BBD49B}
\definecolor{goalcolor}{HTML}{CAE4A7}
\definecolor{remarkcolor}{HTML}{E3EEC7}
\definecolor{definitioncolor}{HTML}{FCDFBE}
\definecolor{examplecolor}{HTML}{F9E9D9}
\definecolor{questioncolor}{HTML}{DDE8EB}
\definecolor{conjecturecolor}{HTML}{DDE8EB}
\definecolor{captioncolor}{RGB}{128,128,128}
\definecolor{captioncolor}{RGB}{128,128,128}
\theoremstyle{plain}
\newtheorem{theorem}{Theorem}[section]
\newtheorem{lemma}[theorem]{Lemma}
\newtheorem{fact}[theorem]{Fact}
\newtheorem{proposition}[theorem]{Proposition}
\newtheorem{corollary}[theorem]{Corollary}
\newtheorem{question}[theorem]{Open Problem}
\newtheorem{question*}{Question}
\newtheorem{conjecture}[theorem]{Conjecture}
\newtheorem{observation}[theorem]{Observation}
\newtheorem{assumption}[theorem]{Assumption}
\crefname{claim}{Claim}{Claims}
\crefname{fact}{Fact}{Facts}
\theoremstyle{definition}
\newtheorem{definition}[theorem]{Definition}
\newtheorem{remark}[theorem]{Remark}
\newtheorem{example}[theorem]{Example}
\newtheorem{goal}[theorem]{Goal}
\newtheorem{goal*}{Goal}
\theoremstyle{plain}
\newenvironment{rtheorem}[1]{\begin{theorem}[Restated]}{\end{theorem}}
\newenvironment{rlemma}[1]{\begin{lemma}[Restated]}{\end{lemma}}
\newenvironment{rdefinition}[1]{\begin{definition}[Restated]}{\end{definition}}
\newenvironment{boxtheorem}{\begin{theorem}}{\end{theorem}}
\newenvironment{boxlemma}{\begin{lemma}}{\end{lemma}}
\newenvironment{boxproposition}{\begin{proposition}}{\end{proposition}}
\newenvironment{boxquestion}{\begin{question}}{\end{question}}
\newenvironment{boxquestion*}{\begin{question*}}{\end{question*}}
\newenvironment{boxdefinition}{\begin{definition}}{\end{definition}}
\newenvironment{boxexample}{\begin{example}}{\end{example}}
\newtheorem{exercise}[theorem]{Exercise}
\newenvironment{boxgoal*}{\begin{goal*}}{\end{goal*}}
\newenvironment{boxconjecture}{\begin{conjecture}}{\end{conjecture}}
\newenvironment{boxconjecture*}{\begin{conjecture*}}{\end{conjecture*}}
\newcommand{\ignore}[1]{}
\DeclareMathOperator{\poly}{poly}
\DeclareMathOperator{\sign}{sign}
\DeclareMathOperator{\dist}{dist}
\renewcommand{\Pr}[1]{\bP \left[ #1 \right]} % Probability
\newcommand{\Pru}[2]{\underset{ #1 }\bP \left[ #2 \right]}
\newcommand*{\define}{\coloneqq} % Mika: This is the standard symbol that the average reader is used to.
\newcommand{\inn}[1]{\langle #1 \rangle}
\newcommand{\ind}[1]{\mathds{1} \left[ #1 \right] }
\newcommand{\zo}{\{0,1\}}
\newcommand{\pmset}{\{\pm 1\}}
\newcommand{\cH}{\ensuremath{\mathcal{H}}}
\newcommand{\cM}{\ensuremath{\mathcal{M}}}
\newcommand{\cP}{\ensuremath{\mathcal{P}}}
\newcommand{\cQ}{\ensuremath{\mathcal{Q}}}
\newcommand{\bF}{\ensuremath{\mathbb{F}}}
\newcommand{\bN}{\ensuremath{\mathbb{N}}}
\newcommand{\bP}{\ensuremath{\mathbb{P}}}
\newcommand{\bR}{\ensuremath{\mathbb{R}}}
\tikzset{
    >=latex,
    graph-vert/.style 2 args = {
        draw,
        color = #1!70!black,
        circle,
        thick,
        inner sep = 0pt,
        minimum size = #2,
        fill = #1!50,
        fill opacity = 0.5
    },
    graph-vert/.default = {blue}{0.15cm},
    triangle/.style 2 args = {
        #1!70!black,
        fill = #1!30,
        fill opacity = #2
    },
    universe-rect/.pic = {
        \fill[black!10, rounded corners = 3pt] (-0.1, 0.1) rectangle (#1 + 0.1, -#1 - 0.1);
        \fill[white] (0, 0) rectangle (#1, -#1);
        \draw[step = 0.5, black] (0, 0) grid (#1, -#1);
        \draw[thick] (-0, 0) rectangle (#1, -#1);
    },
    vert/.style = {
        draw,
        thick,
        rectangle,
        rounded corners = 2pt,
%        ellipse
    },
    semisim/.style = {
        ->,
        blue,
        dashed,
        decorate,
        decoration = {
            snake,
            amplitude = 0.5,
            segment length = 2
        }
    }
}
\renewcommand{\epsilon}{\varepsilon}
\DeclareMathOperator{\signrank}{rank_\pm}
\DeclareMathOperator{\suprank}{rank_0}
\DeclareMathOperator{\margin}{\mathrm{mar}}
\DeclareMathOperator{\U}{U}
\DeclareMathOperator{\Rand}{R}
\DeclareMathOperator{\img}{image}
\newcommand{\Diag}{\mathrm{Diag}}
\DeclareMathOperator{\rank}{rank}
\DeclareMathOperator{\D}{\mathsf{D}}
\newcommand{\cplxclass}[1]{\mathsf{#1}}
\let\P\undefined\DeclareMathOperator{\P}{\cplxclass{P}}
\DeclareMathOperator{\BPP}{\cplxclass{BPP}}
\DeclareMathOperator{\UPP}{\cplxclass{UPP}}
\DeclareMathOperator{\SUPP}{\cplxclass{SUPP}}
\DeclareMathOperator{\coSUPP}{\cplxclass{coSUPP}}
\DeclareMathOperator{\RP}{\cplxclass{RP}}
\DeclareMathOperator{\coRP}{\cplxclass{coRP}}
\DeclareMathOperator{\SMAR}{\cplxclass{SMAR}}
\DeclareMathOperator{\coSMAR}{\cplxclass{coSMAR}}
\newcommand{\EQ}{\mathsf{EQ}}
\newcommand{\GEQ}{\mathsf{GT}}
\newcommand{\HD}[1]{\mathsf{HD}_{#1}}
\newcommand{\R}{\mathbb{R}}
\title{Sign-Rank of $k$-Hamming Distance is Constant}
\author{
Mika G\"o\"os \\ {\slshape\normalsize EPFL}
\and
Nathaniel Harms \\ {\slshape\normalsize EPFL}
\and
Valentin Imbach \\ {\slshape\normalsize EPFL}
\and
Dmitry Sokolov \\ {\slshape\normalsize EPFL}
}
\date{\vspace{0.5em}\today}
\begin{document}

\maketitle

\begin{abstract}\noindent
We prove that the sign-rank of the $k$-Hamming Distance matrix on $n$ bits is
$2^{O(k)}$, independent of the number of bits $n$. This strongly refutes the
conjecture of Hatami, Hatami, Pires, Tao, and Zhao ({\footnotesize RANDOM
2022}), and Hatami, Hosseini, and Meng ({\footnotesize STOC 2023}), repeated in
several other papers, that the sign-rank should depend on $n$.
This conjecture
would have qualitatively separated \emph{margin} from \emph{sign-rank} (or,
equivalently, bounded-error from unbounded-error randomized communication). In
fact, our technique gives constant sign-rank upper bounds for all matrices which
reduce to $k$-Hamming Distance, as well as large-margin matrices recently shown
to be \emph{irreducible} to $k$-Hamming Distance.
\end{abstract}

\vspace{2em}

{
\setcounter{tocdepth}{2} 
\tableofcontents
}
\thispagestyle{empty}
\setcounter{page}{0}
\newpage
\setcounter{page}{1}

\section{Introduction}

A boolean matrix $M \in \zo^{N \times N}$ can always be represented as a
point--halfspace arrangement. For example, in \cref{fig:identity}, the identity
and a lower triangular
matrix are represented with rows assigned to points
on the unit sphere and columns assigned to halfspaces with boundary through the
origin, such that point $x_i$ belongs to a halfspace $h_j$ if and only if the entry
$i,j$ of the matrix is 1.

\begin{figure}[h!]
    \begin{center}
        \input{pics/half-eq.tex}
    \end{center}
    \caption{\textcolor{Gray}{
            Sign-rank representations of the identity matrix and lower triangular matrix.
        }
    }
    \label{fig:identity}
\end{figure}

The smallest dimension $d$ in which $M$ can be
represented in this way, with rows assigned to points and columns assigned to
halfspaces through the origin, is called the \emph{sign-rank} of $M$, denoted
$\signrank(M)$ (also called \emph{dimension complexity} in learning theory). We
may equivalently define sign-rank of a boolean matrix~$M$ as the smallest rank
$d$ of a matrix $A \in \bR^{N \times N}$ which satisfies $M_\pm(i,j) =
\sign(A(i,j))$ for all $i,j$, where~$M_\pm(i,j) = 2M(i,j) - 1$ is a
representation of $M$ as a sign matrix $M_\pm \in \pmset^{N \times N}$.

In this paper we prove that the
\textsc{$k$-Hamming Distance} matrices have sign-rank at most $2^{O(k)}$. These
are the matrices $\HD{k}^n\colon \zo^n \times \zo^n \to \zo$ where
$\HD{k}^n(x,y) = 1$ if and only if the Hamming distance $\dist(x,y)$
between row index $x \in \zo^n$ and column index $y \in \zo^n$ is exactly $k$.
\begin{boxtheorem}
\label{thm:main}
    For all $n,k\in\bN$, $\signrank(\HD{k}^n) = 2^{O(k)}$.
\end{boxtheorem}
For $k=0$, $\HD{k}^n$ is the identity matrix in \cref{fig:identity}, but for $k
\geq 1$ our theorem improves on the best known (and trivial) bound of $\poly(n)$,
and refutes the conjecture that for some constant $k$ the sign-rank must depend
on $n$. This was a conjecture of Hatami, Hatami, Pires, Tao, and Zhao
\cite{HHPTZ22}, Hatami, Hosseini, and Meng \cite{HHM23}, the basis for a question
of Harms and Zamaraev \cite{HZ24}, and an open problem in
\cite{FHHH24,HH24,HR24,FGHH25}. The goal of this conjecture was to separate the
class of matrices with \emph{large margin}, \ie those which can be represented
as point--halfspace arrangements with a large (constant) margin between any
hyperplane and any point; from those of \emph{constant sign-rank}, \ie the ones
which can be represented as point--halfspace arrangements whose dimension is
independent of matrix size. This is a question of Linial, Mendelson, Schechtman,
and Shraibman \cite{LMSS07} with consequences for communication complexity,
learning theory, circuit complexity, distributed computing, privacy, and other
areas of computer science \cite{FKL+01, BES02, For02, LS09, FX14, BNS19, HWZ22,
HHPTZ22, HZ24, AN25}. We explain this question in more detail in
\cref{section:signrank-vs-margin}.

\paragraph{Generalization.}
Surprisingly, our simple technique applies not only to \textsc{$k$-Hamming
Distance} but to all large-margin matrices obtained from it by \emph{reductions} (\ie boolean combinations), as well as large-margin matrices which are
\emph{irreducible} to \textsc{$k$-Hamming Distance} \cite{FGHH25}. While new
large-margin matrices can be created by reductions, it is a well-known open
problem whether reductions preserve sign-rank~\cite{BMT21,HHPTZ22}, so
reductions to \textsc{$k$-Hamming Distance} are not handled \emph{a
priori} by \cref{thm:main}. Furthermore, \cite{FGHH25} proved that there exist
large-margin matrices which are \emph{irreducible} to \textsc{$k$-Hamming
Distance}. These can be obtained from \textsc{$k$-Hamming Distance} by
``distance-$r$ compositions''\!. \cite{FGHH25} observed that all large-margin
matrices known prior to their work could be obtained from \textsc{$k$-Hamming
Distance} by compositions and reductions. We show that none of these examples can
separate margin from sign-rank:

\begin{boxtheorem}[Generalization of \cref{thm:main}, Informal]
\label{thm:gen-main}
    Any boolean matrix $M$ that is obtained from \textsc{$k$-Hamming Distance} by reductions and
    ``compositions'' has $\signrank(M) = O(1)$.
\end{boxtheorem}
Subsequent to \cite{FGHH25} and concurrent with the present study, recent work
of Sherstov and Storozhenko \cite{SS24} introduces a new class of large-margin
matrices, which are now the only remaining candidates that we are aware of for
separating large margin from constant sign-rank.

The proofs of \cref{thm:main,thm:gen-main} involve new techniques (a focused
study of \emph{support-rank}) that we outline in \cref{sec:techniques}. These
techniques in turn lead us to define some useful new complexity classes, which
we describe in \cref{section:whatsupp} along with some open problems. For the
remainder of this introduction, we discuss the implications of our main results.

\subsection{Context and Consequences for Sign-Rank vs.\ Margin}
\label{section:signrank-vs-margin}

\paragraph{Sign-rank and margin.}
Given a boolean matrix $M$, we can ask to minimize the \emph{dimension} of its
point--halfspace representation, which leads to the definition of sign-rank. But
another way to optimize the point--halfspace representation is to ignore the
dimension and maximize the \emph{margin}, so that no point is too close to the
boundary of any halfspace. For a boolean matrix $M \in \zo^{N \times N}$, we
write
\[
  \margin(M) \define \sup_{u,v} \min_{i,j \in [N]} |\inn{u_i, v_j}|
\]
where the supremum is over all assignments $u, v \colon [N] \to \bR^d$ of the
rows and columns to unit vectors in any dimension $d$ such that the signed
matrix is $M_\pm(i,j) = \sign(\inn{u_i,v_j})$.

It is not well understood how these two types of representations relate to
one another. Small sign-rank does not imply large margin: the triangular matrix in \cref{fig:identity} has sign-rank 2 but small (sub-constant)
margin~\cite{BW16,Vio15,SA23}, while matrices of sign-rank 3 can have margin
(equivalently, discrepancy~\cite{LS09}) as small as $(\poly(N))^{-1}$
\cite{HHL20,ACHS24}. A basic open question is the converse:
\begin{boxquestion}[\cite{LMSS07,HHPTZ22}]
\label{question:intro-main}
Is there a function $\eta$ such that any boolean matrix $M$ satisfies $\signrank(M) \leq \eta(\margin(M)^{-1})$? That is, do matrices of large (constant) margin also have constant sign-rank?
\end{boxquestion}
Contrary to conjectures in earlier work, our \cref{thm:gen-main} shows that all large-margin
matrices covered in \cite{FGHH25} also have constant sign-rank. Sign-rank and margin are important in
several areas of computer science, leading to several equivalent formulations of
this question, for example:
\begin{itemize}[itemsep=0pt]
\item The margin determines the performance of the \emph{perceptron} algorithm. Are there hypothesis
classes of dimension (sign-rank) $\omega(1)$ that the perceptron algorithm can learn with only
$O(1)$ mistakes?
\item Via a relationship to randomized communication complexity, hypothesis classes $\cH$ with
constant margin are exactly those which are PAC learnable under pure differential privacy \cite{FX14,BNS19}.  Is any
super-constant dimensional problem learnable under pure differential privacy?
\item Is every hereditary graph class with constant size \emph{adjacency sketches}
(\eg \cite{FK09,Har20,HWZ22,EHK22,NP24,AN25}) a point--halfspace incidence graph in constant dimension?
\end{itemize}
We next focus on communication complexity, where, as we explain below, the equivalent question is:
\begin{boxquestion}
Is there any communication problem with \emph{constant} bounded-error randomized cost, but
\emph{super-constant} unbounded-error cost?
\end{boxquestion}

\paragraph{Sign-rank in communication.}
One of the main goals in communication complexity is to understand the power of
randomness. When allowing randomness in a communication protocol, there are a
few choices we can make about what to demand from our protocol:
\begin{enumerate}[itemsep=0pt]
\item Is the source of randomness \emph{public-coin} (both parties share the source of randomness),
or \emph{private-coin} (each party has their own source of randomness that the other doesn't see)?
\item Should our protocol have \emph{bounded error} (the probability of error is at most, say $1/4$)
or are we satisfied with \emph{unbounded error} (the probability of error is strictly less than
$1/2$)?
\end{enumerate}
The most interesting choices to compare are \emph{bounded-error, public-coin}
and \emph{unbounded-error, private-coin}, because, unlike the other choices,
these are not obviously weaker or stronger than the other. For a boolean matrix
$M \in \zo^{N \times N}$, we write $\Rand(M)$ for the least cost of a
\emph{bounded error, public-coin} protocol computing~$M$, and $\U(M)$ for the
least cost of an \emph{unbounded error, private-coin} protocol.

Moreover, Newman's theorem~\cite{New91} says that any bounded-error randomized protocol requires at most $O(\log\log N)$ bits of randomness. One player can privately generate these bits and send them to the other player, giving
\begin{equation}
  \U(M) \leq \Rand(M) + O(\log\log N) .
\end{equation}
Is this the best we can do in general, or can we remove the dependence on $N$?
In other words, if we fix $\Rand(M)$ to be \emph{constant} (\ie independent of the
matrix size $N$), does this imply $U(M)$ is also constant? This is equivalent to
\cref{question:intro-main}, by the following argument. Paturi and Simon
\cite{PS86} showed
\begin{equation}
  \U(M) = \log(\signrank(M))\pm 2.
\end{equation}
This means that constant $\U(M)$ is equivalent to constant sign-rank. 
Linial and Shraibman \cite{LS09} showed that margin is equivalent to discrepancy, and therefore
\begin{equation}
\Omega(\log(\margin(M)^{-1})) \leq \Rand(M) \leq O(\margin(M)^{-2}) ,
\end{equation}
so that constant $\Rand(M)$ is equivalent to constant margin. We may therefore rephrase
\cref{question:intro-main} as: \emph{If $\Rand(M) = O(1)$, is $\U(M) = O(1)$?}

The classes of communication problems with $\Rand(M) = O(1)$ have been well
studied \cite{HHH23,HWZ22,HHH22,DHPTU22,EHK22,CHZZ24,FHHH24,HR24,FGHH25,Tom25}, and several papers
\cite{HHPTZ22,HHM23,HH24,HZ24} have conjectured a negative answer to
\cref{question:intro-main}:

\begin{boxconjecture}
\label{conj:intro-main}
There exists a communication problem $M$ with $\Rand(M) = O(1)$ but $\U(M) = \omega(1)$.
\end{boxconjecture}

The most obvious candidates for this conjecture are the \textsc{$k$-Hamming
Distance} problems, which have $\Rand(\HD{k}^n) = \Theta(k \log k)$ for $k <
\sqrt n$ \cite{HSZZ06,Sag18}. Any problem which reduces to $\HD{0}$ (\ie
\textsc{Equality}) has constant sign-rank \cite{HHPTZ22}, but the question
remained open for problems which do not reduce to \textsc{Equality}, including
\textsc{$1$-Hamming Distance} and its generalizations \cite{HHH23,HWZ22, FHHH24,
FGHH25}. Several papers \cite{HHPTZ22,HHM23,HZ24} worked towards the conjecture
that for constant $k \geq 1$, these problems should satisfy
\cref{conj:intro-main}:

\begin{boxconjecture}[Now false]
\label{conj:intro-khd}
For some constant $k \geq 1$, $\signrank(\HD{k}^n) = \omega(1)$.
\end{boxconjecture}

In particular, \cite{HHPTZ22} showed that all known lower bound techniques fail
to prove this conjecture; \cite{HZ24} suggested a (now false) characterization
of the problems with both $\Rand(M) = O(1)$ and $\U(M) = O(1)$ as exactly those
which reduce to \textsc{Equality} (which for XOR functions would follow from
\cref{conj:intro-khd} for $k=1$, due to the result of \cite{CHZZ24}); and \cite{HHM23}
settled \cref{conj:intro-main} for \emph{partial} matrices by an elegant
application of the Borsuk--Ulam theorem to the \textsc{Gap Hamming Distance}
problem (where two parties given $x,y \in \zo^n$ must decide if the Hamming
distance is either at most $\alpha n$ or at least $(1-\alpha) n$ for some
constant $\alpha > 0$). Another approach to solving \cref{conj:intro-main} is to
find a completion $M$ of \textsc{Gap Hamming Distance} which
has $\Rand(M) = O(1)$; this was proven impossible in concurrent (independent)
work \cite{BHHLT25}.

\cref{thm:main} 
gives an upper bound of $\U(\HD{k}^n) = O(k)$ on the unbounded-error
communication cost, showing that it is always smaller than the bounded-error
cost $\Theta(k \log k)$, whereas \cref{conj:intro-khd} posits an arbitrarily
large gap in the other direction. However, the conjecture was sensible for
several reasons:
\begin{itemize}[itemsep=0pt]
  \item It was not known whether any other fundamentally different candidates exist; \cite{FGHH25} only recently showed that there are candidates which are irreducible to \textsc{$k$-Hamming Distance}.
  \item Earlier work on symmetric XOR problems, of which
\textsc{$k$-Hamming Distance} is the simplest example, have not witnessed any upper bounds
superior to $O(n)$ \cite{HQ17} and this is tight for \textsc{Gap Hamming Distance}~\cite{HHM23}.
  \item The Borsuk--Ulam technique can be used to give a lower bound of $\Omega(n)$ for a \emph{continuous} version of the problem, \textsc{$1$-Hamming Distance} on strings in $[0,1]^n$.
  \item Problems reducing to \textsc{Equality} 
  satisfy many nice properties that \textsc{$1$-Hamming Distance} does not, see \eg
  the close relationship between \textsc{Equality} and the $\gamma_2$-norm (see
  \cref{def:gamma-2}) \cite{HHH22,CHHS23,PSS23,CHZZ24,CHHNPS25,Tom25}, and four different
  proofs that \textsc{$1$-Hamming Distance} does not reduce to \textsc{Equality}
  \cite{HHH22,HWZ22,FHHH24,HR24}. \textsc{Equality} is
  ``special'' among the Hamming distance problems, so one may expect it to be
  special with respect to sign-rank as well.
\end{itemize}
Owing partly to the latter reason, \cite{HHPTZ22,HHM23,HZ24} stated the strongest form of
\cref{conj:intro-khd}, for $k=1$ rather than an arbitrary constant $k$. The stronger conjecture is
easier to refute than \cref{conj:intro-khd} (see \cref{rem:unit-distance}), but the easier
refutation does not generalize to $k=2$.

The main contribution of this paper is the more general technique allowing to go
beyond $k=1$, including the candidates for \cref{conj:intro-main} which
\cite{FGHH25} showed cannot be reduce to \textsc{$k$-Hamming Distance}. 

\section{Technique: Support-Rank}
\label{sec:techniques}

The main conceptual idea allowing for our upper bounds is to switch from \emph{sign-rank} to \emph{support-rank}.

\begin{boxdefinition}[Support-Rank]
Let $M \in \zo^{N \times N}$ be a boolean matrix. Its \emph{support-rank} $\suprank(M)$ is the minimal $r$ for which there exists some $A \in \bR^{N \times N}$ with rank $r$, satisfying
\[
    \forall i,j \in [N]\,\colon\qquad M(i,j) = 0 \iff A(i,j) = 0 .
\]
That is, the matrices $A$ and $M$ have the same support.
\end{boxdefinition}
Support-rank has been studied previously in the context of tensor rank~\cite{CU13,BCZ17,BCZ18}. In quantum communication complexity, it has been called \emph{nondeterministic rank}~\cite{dWol03}, in circuit complexity, \emph{equality rank}~\cite{HP2010}, and, in graph theory, \emph{minimum rank}~\cite{FH07}. It is also closely related to \emph{unit-distance graphs}~\cite{EHT65,AK14}; see our discussion in~\cref{rem:unit-distance}.

\paragraph{Why support-rank?}
A basic fact is that any boolean matrix of support-rank $r$ has sign-rank $O(r^2)$; see~\cref{sec:reductions}. The converse is false: we have $\suprank(I_N)=N$ but $\signrank(I_N)=3$ for the $N\times N$ identity matrix. Thus, proving upper bounds on support-rank is a more difficult task. Nevertheless, what is convenient about support-rank is that it behaves better than sign-rank under \emph{boolean combinations} (or \emph{reductions}). To explain this, recall that our goal is to give a sign-rank upper bound for
\[
    \HD{k}^n = \HD{\leq k}^n \wedge \neg \HD{\leq k-1}^n,
\]
where $\land$ and $\neg$ are understood entry-wise and $\HD{\leq k}^n(x,y) = 1$ if and only if $\dist(x,y) \leq k$. The challenge with sign-rank is that it is
not known whether the sign-rank of $A \wedge B$ can be bounded in terms of the sign-ranks of $A$ and~$B$~\cite{BMT21,HHPTZ22}.  So even if we can prove, say, $\signrank(\HD{\leq k}) =
O(1)$, this would not imply any bound on $\signrank(\HD{k})$. By contrast, we show the following useful properties of support-rank:
\begin{itemize}[itemsep=0pt]
	\item (\cref{sec:reductions}): We show that any matrix reducible to matrices with bounded support-rank has bounded sign-rank. All of our arguments for sign-rank upper bounds will rely on this fact.
	\item (\cref{section:veronese-maps}): We explain how to  transform polynomial identities $P(x,y) = 0$ into linear identities $\inn{x',y'} = 0$. This is useful for giving upper bounds on support-rank via polynomials.
\end{itemize}

\subsection{From Support-Rank to Sign-Rank via Reductions} \label{sec:reductions}

Reductions for constant-cost randomized communication are defined similarly to standard oracle
reductions in communication complexity (\eg \cite{BFS86,CLV19}) except that there is no bound on the
size of the oracle query inputs. For any communication problem $\cQ$ (\ie a family of boolean
matrices), and any boolean matrix $P$,  we write $\D^\cQ(P)$ for the minimum cost of a
\emph{deterministic} communication protocol computing $P$ with access to a unit-cost oracle that
computes $\cQ$. We say problem $\cP$ \emph{reduces to} $\cQ$ if there is a constant $c$ such that
for every $P \in \cP$, $\D^\cQ(P) \leq c$. More formally:

\begin{definition}[Oracle Protocols]
Let $\cQ$ be a communication problem. For any matrix $P \in \zo^{N \times N}$, we write
$\D^\cQ(P)$ for the smallest depth of a communication tree $T$, where each inner
node $v$ is labelled by a matrix $Q_v \in \cQ$ and two functions $a_v,b_v$; and
each leaf $\ell$ is labelled with an output value. On inputs $i,j \in [N]$ the
protocol at node $v$ proceeds by computing $Q_v(a_v(i),b_v(j))$ and descending
to the left or right child depending on the result. At a leaf $\ell$ the
protocol outputs the value of $\ell$, which must be equal to $P(i,j)$.
\end{definition}
Since we are concerned only with constant vs.\ non-constant costs, one may equivalently say that $
\cP$ reduces to $\cQ$ if there is a constant $c$ such that every $P \in \cP$ can be written as
\[
    P = \Gamma(Q_1, \dotsc, Q_c)
\]
for some choice of $Q_i \in \cQ$ and boolean function $\Gamma \colon \zo^c \to
\zo$ which is applied entry-wise to $Q_1, \dotsc, Q_c$ to produce $P$; see \eg
\cite{CLV19,FHHH24,FGHH25} for more details on these reductions and
\cite{ABSZ24} for applications in graph theory. It is not hard to see that
reductions preserve constant-cost randomized communication in the bounded-error
model: if $\Rand(\cQ) = O(1)$ and $\D^\cQ(\cP) = O(1)$ then $\Rand(\cP) = O(1)$,
because we may replace each query $Q \in \cQ$ with a randomized subroutine
computing $Q$ using standard majority-vote error boosting to bring the total
error down to $1/4$.

The most important property of support rank is that it allows to upper bound $\signrank(M)$ in terms
of the number of queries $\D^\cQ(M)$ required to compute $M$ with an oracle $\cQ$ that has bounded
support-rank. This lemma generalizes a theorem of \cite{HHPTZ22} which held for queries to the
\textsc{Equality} oracle.

\begin{boxlemma}
\label{lemma:supp-to-sign}
Let $\cQ$ be a family of boolean matrices with $\suprank(\cQ)\leq r$. Then, for any boolean $P$,
\[
    \signrank(P) \leq O(r^2)^{\D^\cQ(P)}.
\]
\end{boxlemma}

\begin{proof}
(Generalization of~\cite[Theorem~3.8]{HHPTZ22}.)
	Let $T$ be a decision tree for $P$ of depth $q = \D^\cQ(M)$, which queries problems in $\cQ$.  We prove that $\signrank(P) \leq \left(1+r^2\right)^q$ by induction on $q$. The base case $q = 0$ is immediate. For $q \geq 1$, let $R\in \cQ$ be the problem queried at the root of $T$. Let $A_0$ and $A_1$ be the sign matrices corresponding to the two sub-problems computed by $T$ after $R$ returns $0$ and $1$, respectively. Thus,
	\[
	P_\pm = R\circ A_1 + (\neg R)\circ A_0,
	\]	
    where $A\circ B$ is the entry-wise (or \emph{Hadamard}) product defined by $\smash{(A\circ B)_{ij} = A_{ij}B_{ij}}$.
	By the inductive hypothesis, there are real matrices $\smash{\tilde A_0}$ and $\smash{\tilde A_1}$ with rank at most $\smash{\left(1+r^2\right)^{q-1}}$ and with the same sign pattern as $A_0$ and $A_1$, respectively. Similarly, let $\smash{\tilde R}$ be a real matrix with the same support as $R$ and rank at most~$r$. Note that for a sufficiently large $\gamma > 0$, the real matrix $\tilde A_0 + \gamma \big(\tilde R\circ \tilde R\circ \tilde A_1\big)$
	has the same sign pattern as $P_\pm$. This is because on the support of $\smash{\tilde R}$, the second term will dominate, whilst the first term dictates the sign wherever $\smash{\tilde R}$ is zero. Using the fact\footnote{Write $A = \sum_{i = 1}^{r} a_iu_i^T$ and $B = \sum_{j = 1}^{s} b_jv_j^T$. Then, we have
	$A\circ B
    = \big(\sum_i a_iu_i^T\big)\circ \big(\sum_j b_iv_j^T\big)
    = \sum_{i,j} (a_iu_i^T)\circ (b_jv_j^T)
    = \sum_{i,j} (a_i\circ b_j)(u_i\circ v_j)^T$.
	This shows that $A\circ B$ can be written as the sum of $rs$ many rank-$1$ matrices.}
    that $\rank(A\circ B)\leq \rank(A)\rank(B)$, we conclude
	\[
	\signrank(P)\leq \rank\!\Big(\tilde A_0 + \gamma\big(\tilde R\circ \tilde R\circ \tilde A_1\big)\Big) \leq \rank\!\big(\tilde A_0\big) + \rank\!\big(\tilde R\big)^2\rank\!\big(\tilde A_1\big) \leq \left(1+r^2\right)^q.
    \qedhere
	\]
\end{proof}
\bigskip
\subsection{From Polynomials to Support-Rank via Veronese Maps}
\label{section:veronese-maps}

Note that the support-rank of $M\in\{0,1\}^{N\times N}$ is at most $r$ if and only if there are vectors $u_1,u_2,\dots, u_N\in\R^r$ and $v_1,v_2,\dots, v_N\in\R^r$ such that
\begin{equation}
\label{eq:example-veronese-lin}
	\forall i,j\in[n]\,\colon\qquad M_{ij} = 0 \iff \langle u_i, v_j \rangle = 0.
\end{equation}
It is more convenient to work with polynomial equations $P(u_i, v_j) = 0$. For example, suppose we have
\begin{equation}
\label{eq:example-veronese-poly}
	\forall i,j\in[n]\,\colon\qquad M_{ij} = 0 \iff P(u_i, v_j) = 0
\end{equation}
where, say, $u_i, v_j \in \bR^2$ and $P(a, b)$ is a polynomial on 4 variables, say $P(a,b) = a_1^2 +
b_1^2 - a_1b_1 + 3a_2b_2$.  Then we can write $P$ as an inner product of two vectors,
each depending only on $a$ or only on $b$, by grouping each monomial into its
own dimension:
\[
  P(a, b) = \left\langle (a_1^2, 1, -a_1, 3a_2), \; (1, b_1^2, b_1, b_2) \right\rangle .
\]
In this way we transform equations like \cref{eq:example-veronese-poly} into the equations like
\cref{eq:example-veronese-lin} required for the definition of support-rank, where the dimension
is at most the number of monomials in $P$. In general we have the following proposition (whose proof simply generalizes the above discussion and is hence omitted).

\begin{boxproposition}[Veronese Map]\label{lemma:veronese}
	Let $M\in\{0,1\}^{n\times n}$ and let $P$ be a real polynomial in $2m$ variables. Assume that there are functions $\alpha_t\colon [n]\to \R$ and $\beta_t\colon [n]\to \R$ for $t\in [m]$ that satisfy
\[
	\forall i,j\in[n]\,\colon\qquad M_{ij} = 0 \iff P\big(\alpha_1(i),\ \dots,\ \alpha_m(i),\ \beta_1(j),\ \dots,\ \beta_m(j)\big) = 0.
\]
	Then, $\suprank(M)$ is at most the number of monomials in $P$ with non-zero coefficients.
\end{boxproposition}

\subsection{Support-Rank and Unit-Distance Graphs}
\label{rem:unit-distance}
A \emph{(faithful) unit-distance graph}~\cite{EHT65,AK14} in dimension $d$ is a graph $G = (V,E)$ whose vertices $x \in V$ can be identified with points $u_x \in \bR^d$ such that
\[
\{x,y\} \in E \iff 
\|u_x - u_y\|_2 = 1.
\]
We claim that the complement of the adjacency matrix of a unit-distance graph in dimension $d$ has support-rank at most $O(d)$. Indeed, we have $\|u_x - u_y\|_2 = 1 \Leftrightarrow P(u_x,u_y)=0$ for the $2d$-variate polynomial $\smash{P(a,b)\coloneqq \sum_{i=1}^d (a_i-b_i)^2 - 1}$ with $O(d)$ monomials, and the claim follows from~\cref{lemma:veronese}. Conversely, it is easy to show that that any boolean matrix $M$ with $\suprank(M) = d$ is the complement
of the bi-adjacency matrix of a bipartite unit-distance graph in dimension~$d$ (indeed, normalize all $u_i$, $v_j$ to have length $1/\sqrt{2}$).

It is a classic fact~\cite{EHT65} that the hypercube graph (the bipartite graph with
bi-adjacency matrix~$\HD{1}^n$) is a unit-distance graph in dimension $2$. By the above discussion, we have $\suprank(\neg\HD{1}^n)\leq O(1)$, which implies $\signrank(\HD{1}^n)\leq O(1)$ via \cref{lemma:supp-to-sign}. This already proves \cref{thm:main} in the special case~$k=1$. However, this argument does not generalize to $k = 2$, because $\neg\HD{2}^n$ (and also $\HD{2}^n$) contains an unbounded-size identity submatrix, which shows that its support-rank is unbounded.

\section{Sign-Rank of \texorpdfstring{$k$}{k}-Hamming Distance} \label{sec:proof-of-main-thm}

In this section, we prove \cref{thm:main}. We have $\HD{k}^n = \HD{\geq k}^n\land \neg\HD{\geq k+1}^n$ and hence by \cref{lemma:supp-to-sign} it suffices to prove an upper bound on the support-rank of $\HD{\geq k}^n$.
\begin{boxtheorem}\label{thm:HD-suprank}
We have $2^k \leq \suprank(\HD{\geq k}^n) \leq 4^k$ for all $k\leq n$.
\end{boxtheorem}
It is important here to consider $\HD{\geq k}^n$ rather than $\HD{\leq k}^n$, as the latter contains an identity submatrix
of size $\Omega(2^n)$ and therefore its support-rank depends on $n$. We also note that this theorem, together with \cref{lemma:supp-to-sign}, implies more generally that any matrix reducible to $\HD{k}^n$ has constant sign-rank.
\begin{corollary}\label{cor:reductions-to-HD}
    For all boolean matrices $M$, $\signrank(M) = 2^{O(k \cdot \D^{\HD{k}}(M))}$ and $\U(M) = O(k \cdot \D^{\HD{k}}(M) )$.
\end{corollary}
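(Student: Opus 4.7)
The plan is to combine \cref{thm:HD-suprank} with \cref{lemma:supp-to-sign}, passing from queries to the \emph{exact} Hamming-distance oracle $\HD{k}^n$ (whose own support-rank grows with $n$) to queries against a \emph{threshold} oracle family of bounded support-rank. The Paturi--Simon identity $\U(M) = \log\signrank(M) \pm 2$ quoted in \cref{section:signrank-vs-margin} then delivers the unbounded-error bound for free from the sign-rank bound.

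First I would fix $M$ and $k$ and define the oracle family
\[
    \cQ \define \{\HD{\geq k}^n,\ \HD{\geq k+1}^n : n \in \bN\},
\]
so that by \cref{thm:HD-suprank} every matrix in $\cQ$ has support-rank at most $r \define 4^{k+1} = O(4^k)$. Next I would simulate any $\HD{k}$-oracle decision tree computing $M$ of depth $q \define \D^{\HD{k}}(M)$ by a $\cQ$-oracle decision tree of depth at most $2q$: each internal node querying $\HD{k}^n$ on inputs $(a,b)$ is replaced by a two-level gadget that first queries $\HD{\geq k}^n(a,b)$ and, only on answer $1$, queries $\HD{\geq k+1}^n(a,b)$. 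Using the identity $\HD{k}^n = \HD{\geq k}^n \wedge \neg \HD{\geq k+1}^n$, the pair of answers determines the original $\HD{k}^n$ answer, so each combination can be routed to the correct child of the original node. This yields $\D^{\cQ}(M) \leq 2 \D^{\HD{k}}(M)$.

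Plugging these parameters into \cref{lemma:supp-to-sign} gives
\[
    \signrank(M) \;\leq\; O(r^2)^{\D^{\cQ}(M)} \;\leq\; O(16^{k+1})^{2\D^{\HD{k}}(M)} \;=\; 2^{O(k \cdot \D^{\HD{k}}(M))},
\]
which is the first claim. Taking logarithms and invoking Paturi--Simon yields $\U(M) \leq \log\signrank(M) + 2 = O(k \cdot \D^{\HD{k}}(M))$, the second claim.

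There is no serious obstacle here; the corollary is essentially bookkeeping on top of \cref{thm:HD-suprank} and \cref{lemma:supp-to-sign}. The only subtlety is the factor-$2$ overhead from simulating one exact-distance query by two threshold queries, which is harmlessly absorbed into the $O(\cdot)$ in the exponent. One could shave this factor by proving $\suprank(\HD{k}^n) = O(16^k)$ directly (e.g.\ by combining support-rank certificates for $\HD{\geq k}^n$ and $\HD{\geq k+1}^n$ via a Hadamard product), but this is purely cosmetic and not needed for the stated asymptotic bounds.
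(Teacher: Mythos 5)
Your proposal is correct and follows exactly the route the paper intends (it states the corollary without a separate proof): decompose each $\HD{k}^n$ query as $\HD{\geq k}^n \wedge \neg\HD{\geq k+1}^n$, apply \cref{thm:HD-suprank} and \cref{lemma:supp-to-sign} with the threshold oracle family, and then use Paturi--Simon for the $\U(M)$ bound. The constant-factor overhead from the two-query simulation is indeed harmless, so nothing further is needed.
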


The lower bound in \cref{thm:HD-suprank} follows directly from the fact that $\HD{\geq k}$ contains an identity submatrix of size $2^k$, induced by the set of all strings ending in $n-k$ zeros. 

The upper bound uses the following method (which we further generalize in \cref{section:rank-problems}). We show that there exists a map $A \colon \zo^n \to \bR^{k \times k}$ assigning to each binary string $x\in \zo^n$ a $k\times k$ matrix $A(x)$ with the property that
\begin{equation}
\label{eq:intro-hd-rankproblem}
    \forall x,y \in \zo^n \,\colon\qquad \HD{\geq k}^n(x,y) = 1 \iff \rank\!\big( A(x) - A(y) \big) = k .
\end{equation}
In other words, the output of the communication problem depends only on whether the matrix $A(x)-A(y)$ has full rank. This can be verified by testing if its determinant is $0$. Since the determinant is given by a polynomial in the entries of the matrix, we can then use a Veronese map to obtain a support-rank upper bound. \cref{fig:khd-proof} illustrates the proof that follows.

\begin{figure}[H]
	\begin{center}
		\begin{tikzpicture}[scale = 0.5]
	
	\node at (-2.7, 0) {$\Diag(x-y)\,=$};
	
	\foreach \i in {0, 1,...,5}{
		\fill[green!20] (\i ,2 - \i) rectangle (\i+1, 3 -\i);
	}
	
	\draw[gray, thin] (0,-3) grid (6, 3);
	\draw[thick] (0,-3) rectangle (6,3);
	\draw [decorate,decoration={brace, amplitude=5pt, raise=1mm}]
	(0,3) -- (6,3) node[midway,yshift=6mm,]{$n$};

	\draw[->] (6.5,0) -- (10,0) node[above, midway, yshift=1mm] {compress} node[below, midway, yshift=-1mm] {$\Pi$};;
	\tikzset{shift={(10.5, 0)}}
	
	\fill[blue!20] (0,-2) rectangle (4,2);
	\draw[gray, thin] (0,-2) grid (4,2);
	\draw[thick] (0,-2) rectangle (4,2);
	\draw [decorate,decoration={brace, amplitude=5pt, raise=1mm}]
	(0,2) -- (4,2) node[midway,yshift=6mm,]{$k$};

	\draw[->] (4.5,0) -- (7.5,0) node[above, midway, yshift=1mm] {test} node[below, midway, yshift=-1mm] {det};	
	\node at (10.1,0.1) {Poly$(x,y)\, \stackrel{?}{=}\,  0$};

\end{tikzpicture}
	\end{center}
        {\color{captioncolor}
	\[
	\dist(x,y) \geq k \Leftrightarrow \rank\!\big(\Diag(x-y)\big) \geq k\Leftrightarrow  \rank\!\big(\Pi\big(\Diag(x-y)\big)\big) = k \Leftrightarrow \det\!\big(\Pi\big(\Diag(x-y)\big)\big) \neq 0
	\]\vspace{-5mm}
        }
        \caption{A sketch of the argument: We reduce $\HD{\geq k}$ to checking whether a polynomial vanishes.}
        \label{fig:khd-proof}
\end{figure}

As a first step, observe that if we view $\zo^n$ as a subset of $\R^n$, then
\begin{equation}\label{eq:first-step}
\forall x,y\in\{0,1\}^n\,\colon\quad \HD{\geq k}^n(x,y) = 1 \iff x-y \text{ has $\geq k$ non-zero entries} \iff \rank(\Diag(x-y)) \geq k.
\end{equation}
The matrices $\Diag(x-y)$ are of size $n \times n$ and the next step is to reduce their size to $k\times k$, without changing the rank, provided it is at most $k$.
Informally, this can be accomplished by applying a random projection which has
probability $1$ of preserving ranks. Formally, we have the following lemma.
\begin{boxlemma}[Rank Compression]
    \label{lemma:rank-compression}
    Let $\cM$ be a finite set of matrices in $\R^{a\times b}$. For any $a'$ and $b'$, there exists a linear map $\Pi\colon \R^{a\times b}\to\R^{a'\times b'}$ which satisfies
    \[
    \forall M\in \cM\,\colon\qquad \rank\big(\Pi(M)\big) = \min\!\big(\rank(M),\ a',\ b'\big).
    \]
\end{boxlemma}
\begin{proof}
It suffices to prove the statement with $b = b'$, since the general result is
recovered by applying this case twice, transposing in between. We can further
assume $a' < a$, as otherwise we can just take any injective linear map for
$\Pi$. For each $M\in\cM$, pick a subspace $V_M\subseteq \img(M)\subseteq \R^a$
of dimension $\min(\rank(M), a')$. Since $\cM$ is finite, there is a subspace
$V\subseteq\R^a$ of dimension $a-a'$ such that
\[
V\cap \bigcup_{M\in\cM} V_M = \{0\}.
\]
We now let $\Pi(x) = Px$ where $P\in\R^{a'\times a}$ is the projection with kernel $V$. Thus, for all $M\in\cM$ we have
\[
\rank\big(\Pi(M)\big) = \rank(PM) = a' - \dim\!\big(V \cap \img(M)\big) \geq \dim(V_M) = \min\!\big(\rank(M),\ a'\big).
\]
The converse inequality clearly holds as well.
\end{proof}
In particular, using $\cM = \{\Diag(x-y)\mid x,y\in \zo^n\}$ and the fact that
$k\leq n$, we obtain a linear map~$\Pi\colon\R^{n\times n}\to
\R^{k\times k}$ with the desired property
\begin{equation}\label{eq:khd-rank-problem}
\forall x,y\in \{0,1\}^n\,\colon\qquad \rank\!\Big(\Pi\big(\Diag(x-y)\big)\Big) = k \iff \rank\!\big(\Diag(x-y)\big) \geq k
\iff \dist(x,y) \geq k.
\end{equation}
Thus, setting $A(x) = \Pi(\Diag(x))$ for all $x\in\zo^n$, we obtain the desired characterisation of \Cref{eq:intro-hd-rankproblem}.

It remains to express the rank bound of this $k\times k$ matrix in terms of a polynomial suitable for the Veronese map. Writing $S_k$ for the set of permutations on $[k]$, we further have
\[
\forall x,y\in \{0,1\}^n\,\colon\qquad \det\!\big(A(x)-A(y)\big)\big) = \sum_{\pi\in S_k}\sign(\pi)\prod_{i = 1}^k \Big(A(x)_{i\pi(i)}-A(y)_{i\pi(i)}\Big).
\]
Expanding the right side of the above, we obtain a polynomial over $2k$ variables, corresponding to the entries of $A(x)$ and $A(y)$, with at most $2^k\cdot k!$ many monomials. Thus, using the Veronese map from \Cref{lemma:veronese}, we conclude that
\[
\suprank(\HD{\geq k}^n) \leq k!\cdot 2^k.
\]
This bound is already independent of $n$, but for \cref{thm:HD-suprank} we claim an even better upper bound of $4^k$. To achieve this, we can be smarter when constructing the Veronese map. In the following section, we prove the following fact, which gives an improved Veronese map, concluding the proof of \cref{thm:HD-suprank}.

\begin{proposition}\label{fact:det}
    For matrices $A,B\in \R^{k\times k}$, we can write $\det(A-B)$ as a sum of at most $4^k$ many terms, each of the form $\pm \det(A')\det(B')$ where $A'$ and $B'$ denote some square submatrices of $A$ and $B$, respectively.
\end{proposition}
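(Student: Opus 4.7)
The plan is to combine multilinearity of the determinant in rows with the generalized Laplace expansion. Let me sketch it in more detail.

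First, I would use the fact that $\det$ is multilinear in the rows. Writing $r_i(A)$ and $r_i(B)$ for the $i$th rows of $A$ and $B$, the rows of $A-B$ are $r_i(A)-r_i(B)$, so expanding yields
\[
\det(A-B) \;=\; \sum_{S \subseteq [k]} (-1)^{|S|}\, \det\!\big(M_S\big),
\]
where $M_S$ is the $k \times k$ matrix whose $i$th row equals $r_i(B)$ for $i \in S$ and $r_i(A)$ for $i \notin S$. So far this gives $2^k$ terms, but each determinant $\det(M_S)$ still mixes rows of $A$ and $B$ and must be split further.

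Next, for each $M_S$ I would apply the generalized Laplace expansion along the row-set $S$: for any $k \times k$ matrix $M$ and $S \subseteq [k]$,
\[
\det(M) \;=\; \sum_{\substack{T \subseteq [k]\\ |T|=|S|}} (-1)^{\sigma(S,T)}\, \det\!\big(M[S,T]\big)\,\det\!\big(M[\bar S,\bar T]\big),
\]
where $M[S,T]$ denotes the submatrix of $M$ on rows $S$ and columns $T$, and $\sigma(S,T) = \sum_{i \in S} i + \sum_{j \in T} j$. Because the rows of $M_S$ indexed by $S$ are rows of $B$ and those indexed by $\bar S$ are rows of $A$, the two factors become $\det(B[S,T])$ and $\det(A[\bar S,\bar T])$, both square submatrices of $B$ and $A$ respectively, as required. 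Combining the two expansions yields
\[
\det(A-B) \;=\; \sum_{S \subseteq [k]} \;\sum_{\substack{T \subseteq [k]\\ |T|=|S|}} (-1)^{|S|+\sigma(S,T)}\, \det\!\big(A[\bar S,\bar T]\big)\,\det\!\big(B[S,T]\big).
\]

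Finally, I would bound the number of terms. Grouping by $s = |S|$, the number of $(S,T)$ pairs is $\binom{k}{s}^2$, so the total is
\[
\sum_{s=0}^{k} \binom{k}{s}^2 \;=\; \binom{2k}{k} \;\leq\; 4^k,
\]
using the Vandermonde identity and the standard bound on the central binomial coefficient. This is exactly the claimed count. There is no real obstacle here; the only thing to be careful about is that the two expansions (multilinearity, then Laplace) compose cleanly so that each resulting term is genuinely of the form $\pm\det(A')\det(B')$ with $A',B'$ square submatrices, which is immediate from the row-partition structure of $M_S$.
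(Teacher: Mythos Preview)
Your proof is correct and arrives at exactly the same identity as the paper's Lemma~3.3 (from \cite{Mar90}), with the same term count $\sum_s \binom{k}{s}^2 = \binom{2k}{k} \le 4^k$. The only cosmetic difference is that you invoke the generalized Laplace expansion as a known fact after the row-multilinearity step, whereas the paper derives the whole formula directly from the Leibniz expansion; the resulting decomposition and bound are identical.
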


\begin{remark}
\label{remark:large-alphabet}
The above proof works for \textsc{$k$-Hamming Distance} over any finite alphabet, not only for~$\zo$. One simply has to identify the alphabet with some subset of $\R$ to satisfy \cref{eq:first-step}.
\end{remark}

\subsection{Proof of \texorpdfstring{\cref{fact:det}}{Proposition}}\label{sec:improving-bound}

We use the following lemma from \cite{Mar90}, whose proof we include for completeness. Note that, in this lemma, the right side of the equation contains $\sum_{k = 0}^n\binom{n}{k}^2 = \binom{2n}{n}\leq 4^n$ many terms, which proves \cref{fact:det}.

\begin{boxlemma}\label{lemma:det-of-sum}
	Let $n\in\bN$ and let $A,B\in \R^{n\times n}$. Then,
	\[
	\det(A+B) = \sum_{\substack{\alpha,\beta \subseteq [n]\\[1mm] |\alpha| = |\beta|}} (-1)^{s(\alpha)+s(\beta)}\det(A_{|\alpha\times \beta})\det(B_{|\bar\alpha\times \bar\beta}),
	\]
	where $s(\cdot)$ denotes the sum of a set, and $A_{|\alpha \times \beta}$ denotes the submatrix of $A$ indexed by sets $\alpha$ and $\beta$.
\end{boxlemma}

\begin{figure}[t]
	\begin{center}
		\begin{tikzpicture}[scale = 0.65]
	\fill[blue!20] (1,5) rectangle (4,2);
	
	\fill[green!20] (0,6) rectangle (1,5);
	\fill[green!20] (4,2) rectangle (6,0);
	\fill[green!20] (0,2) rectangle (1,0);
	\fill[green!20] (4,6) rectangle (6,5);
	
	\draw[step = 1, black!40, thin] (0,0) grid (6,6);
	\draw[thick] (0,0) rectangle (6,6);
	
	\node at (0.5, 0.5) [color = citecolor] {$B_{61}$};
	\node at (1.5, 3.5) [color = blue] {$A_{32}$};
	\node at (2.5, 4.5) [color = blue] {$A_{23}$};
	\node at (3.5, 2.5) [color = blue] {$A_{44}$};
	\node at (4.5, 5.5) [color = citecolor] {$B_{15}$};
	\node at (5.5, 1.5) [color = citecolor] {$B_{56}$};
	
	\draw [decorate,decoration={brace, amplitude=5pt, mirror, raise=1mm}]
	(0,5) -- (0,2) node[midway,xshift=-6mm]{$\alpha$};
	
	\draw [decorate,decoration={brace, amplitude=5pt, raise=1mm}]
	(1,6) -- (4,6) node[midway,yshift=6mm]{$\beta$};

	\tikzset{shift={(10, 0)}}
	
	\fill[blue!20] (1,5) rectangle (4,2);
	
	\fill[green!20] (0,6) rectangle (1,5);
	\fill[green!20] (4,2) rectangle (6,0);
	\fill[green!20] (0,2) rectangle (1,0);
	\fill[green!20] (4,6) rectangle (6,5);
	
	\draw[step = 1, black!40, thin] (0,0) grid (6,6);
	\draw[thick] (0,0) rectangle (6,6);
	
	\node at (0.5, 0.5) [color = citecolor] {$B_{61}$};
	\node at (1.5, 2.5) [color = blue] {$A_{42}$};
	\node at (2.5, 3.5) [color = blue] {$A_{33}$};
	\node at (3.5, 4.5) [color = blue] {$A_{24}$};
	\node at (4.5, 5.5) [color = citecolor] {$B_{15}$};
	\node at (5.5, 1.5) [color = citecolor] {$B_{56}$};
	
	\draw [decorate,decoration={brace, amplitude=5pt, mirror, raise=1mm}]
	(0,5) -- (0,2) node[midway,xshift=-6mm]{$\alpha$};
	
	\draw [decorate,decoration={brace, amplitude=5pt, raise=1mm}]
	(1,6) -- (4,6) node[midway,yshift=6mm]{$\beta$};
	
\end{tikzpicture}
	\end{center}
	\vspace{-2mm}
	{\color{captioncolor}
		\[
		B_{15}A_{23}A_{32}A_{44}B_{56}B_{61} + B_{15}A_{24}A_{33}A_{42}B_{56}B_{61} = (A_{23}A_{32}A_{44}+A_{24}A_{33}A_{42})(B_{15}B_{56}B_{61})
		\]}
	\vspace{-5mm}
	\caption{Two terms in the expansion of $\det(A+B)$ that only differ on the square given by $\alpha\times\beta$ and can thus be factored. Doing this for all like terms of each square, we arrive at \Cref{lemma:det-of-sum}.}
	\label{fig:det-sum}
\end{figure}

\begin{proof}
Let $S_n$ denote the set of all permutations on the set $[n]$. We have
	\begin{align*}
		\det(A+B) &= \sum_{\pi\in S_n}\sign(\pi)\prod_{i = 1}^n(A+B)_{i\pi(i)}
		= \sum_{\alpha\subseteq [n]}\ \sum_{\pi\in S_n}\sign(\pi)\ \prod_{i \in \alpha} A_{i\pi(i)}\prod_{i \in \bar\alpha} B_{i\pi(i)}.
	\end{align*}
	Given a choice of $\alpha \subseteq [n]$, any permutation $\pi\in S_n$ can be uniquely decomposed as $\pi = \tau\circ\pi_{\alpha}\circ \pi_{\bar\alpha}$ where
	\[
	\pi_{\alpha}(x) = x \text{ for all } x\in\bar\alpha\qquad\text{and}\qquad \pi_{\bar\alpha}(x) = x \text{ for all } x\in\alpha,
	\]
	and $\tau\in S_n$ is the unique permutation with $\tau(\alpha) = \pi(\alpha)$ that is order preserving on both $\alpha$ and $\bar\alpha$. Note that $\tau$ only depends on $\pi(\alpha)$, which we now call $\beta$. Moreover, after fixing $\beta$, the correspondence between pairs $(\pi_\alpha,\pi_{\bar\alpha})$ and $\pi$ is bijective. Thus,
\allowdisplaybreaks
\begin{align*}
		\det(A+B) &= \sum_{\substack{\alpha,\beta\subseteq [n]\\[1mm]|\alpha|=|\beta|}}\sum_{\pi_\alpha}\sum_{\pi_{\bar\alpha}}\sign(\tau\circ \pi_{\alpha}\circ \pi_{\bar\alpha})\ \prod_{i \in \alpha} A_{i(\tau\circ \pi_\alpha)(i)}\prod_{i \in \bar\alpha} B_{i(\tau\circ \pi_{\bar\alpha})(i)}\\
		&= \sum_{\substack{\alpha,\beta\subseteq [n]\\[1mm]|\alpha|=|\beta|}}\sign(\tau)\left[\sum_{\pi_\alpha}\sign(\pi_{\alpha})\prod_{i \in \alpha} A_{i(\tau\circ \pi_\alpha)(i)}\right]\left[\sum_{\pi_{\bar\alpha}} \sign(\pi_{\bar\alpha})\prod_{i \in \bar\alpha}B_{i(\tau\circ \pi_{\bar\alpha})(i)}\right]\\[2mm]
		&= \sum_{\substack{\alpha,\beta\subseteq [n]\\[1mm]|\alpha|=|\beta|}}\sign(\tau)\det(A_{|\alpha\times \beta})\det(B_{|\bar\alpha\times \bar\beta}).
	\end{align*}
	Finally, note that the number of inversions of $\tau$ is given by
	$\sum_{i\in\alpha}|\pi(i)-i|$, which has the same parity as $\sum_{i\in\alpha}(\pi(i)+i) = s(\beta) + s(\alpha)$.
	Thus, $\sign(\tau) = (-1)^{s(\alpha)+s(\beta)}$, just as desired.
\end{proof}

\section{Rank Problems and Generalizations of \texorpdfstring{$k$}{k}-Hamming Distance}
\label{section:rank-problems}

\cref{cor:reductions-to-HD} showed that every problem that reduces in $q$ queries to \textsc{$k$-Hamming Distance} has sign-rank~$2^{O(qk)}$. But \cite{FGHH25} recently showed that there
exist problems with $\Rand(M) = O(1)$ that \emph{do not} reduce to
\textsc{$k$-Hamming Distance}. At first, this provides some hope of using these
problems to separate constant margin from constant sign-rank, but we shall crush
this hope, using another new idea.

The problems of \cite{FGHH25} were constructed via \emph{distance-$r$
compositions}. The simplest example, which was the main focus of their paper, is
the \textsc{$\{4,4\}$-Hamming Distance} problem. We will discuss only this
example here and leave the formal definition of \emph{distance-$r$ compositions} for
\cref{section:rank-closure-composition}.

\begin{boxexample}[$\{4,4\}$-Hamming Distance]
\label{ex:44-hamming-distance}
Alice and Bob receive matrices $X, Y \in \zo^{n \times n}$ respectively. Write
$X_i$ for the $i^\mathrm{th}$ row of $X$, and similar for $Y$. The players
should output 1 if and only if the following conditions are satisfied:
\begin{enumerate}
    \item There are at most 2 rows $i,j \in [n]$ such that $X_i \neq Y_i$ and $X_j \neq Y_j$; and
    \item For each row $i$ where $X_i \neq Y_i$, it holds that $\dist(X_i, Y_i) \leq 4$.
\end{enumerate}
\end{boxexample}
In essence, the reason that \textsc{$\{4,4\}$-Hamming Distance} does not reduce
to \textsc{$k$-Hamming Distance} for any constant $k$ is that a
\textsc{$k$-Hamming Distance} query is not capable of distinguishing between two
rows each of distance 4 (where the correct output should be 1), and two rows of
distance 6 and 2 (where the correct output should be 0). This poses a challenge
for sign-rank as well, since a na\"ive application of our method for
\textsc{$k$-Hamming Distance} encounters the same issue.

To handle this type of problem, we define a class of
problems called \emph{rank problems}. A rank problem is any problem which can be
expressed as a function of the rank of the sum of matrices held by Alice and Bob; in other words, any problem that can
put in the form similar to \cref{eq:intro-hd-rankproblem} in our upper bound of
\textsc{$k$-Hamming Distance}. This notion of a rank problem resembles the problems defined in \cite{SS24}, but concerning matrices over $\R$ instead of a finite field $\bF$. This difference is crucial in our argument.

\begin{boxdefinition}[Rank Problem] \label{def:rank-problem}
A boolean matrix $P \in \zo^{N \times N}$ is a \emph{rank problem} of
\emph{order} $k$ if for some $a,b$ and every $x \in [N]$, there exist real matrices $A(x)$ and $B(x)$ in $\R^{a\times b}$, satisfying
\[
	\forall x,y\in[N]\,\colon\qquad P(x,y) = g\Big(\rank\!\big(A(x) + B(y)\big)\Big),
\]
where $g \colon \{0,1,2,\dots\} \to
\zo$ is a function which is constant for inputs $\geq k$. 
We say that a family $\cP$ of boolean matrices is \emph{family of rank problems} of
order $k$ if there is some function $g$ such that
each $P \in \cP$ is a rank problem of order $k$ with associated function $g$. Moreover, we call the rank problem \emph{symmetric} if $A(x) + B(x) = 0$ for all $x \in [N]$.
\end{boxdefinition}

\begin{remark}\label{rem:rank-domain}
	In the above definition, we can equivalently take $a = b = k$. This is because we can always embed $A$ and $B$ in larger matrices, or compress them using \Cref{lemma:rank-compression} without changing the ranks pertinently. We also note that it suffices to define $g$ on the domain $\{0,1,\dots, k\}$.
\end{remark}

\begin{example}\label{ex:HD-rank-problem}
In the proof of \cref{thm:HD-suprank}, we showed in \cref{eq:khd-rank-problem} that
$\HD{\geq k}$ is a symmetric rank problem of order $k$, with associated function $g(t) =
\ind{t \geq k}$.
\end{example}

We prove the following three properties of rank problems.

\begin{boxtheorem}\label{thm:rank-problem-main}
	Rank problems satisfy the following three properties:
	\begin{enumerate}
		\item Rank problems of constant order have constant sign rank.
		\item Rank problems of constant order are closed under reductions.
		\item Symmetric rank problems of constant order are closed under distance-$r$ compositions.
	\end{enumerate}
\end{boxtheorem}

Since \textsc{$k$-Hamming Distance} is a rank problem of constant order, the
above 3 properties guarantee that any problem obtained by reductions and
compositions of it will have constant sign-rank. As explained in \cite{FGHH25},
this covers all known examples of problems with $\Rand(M) = O(1)$, apart from the remaining separation candidates from \cite{SS24}.

In the following three subsections, we will give proofs of each of the three parts in the above
theorem, with quantitative bounds.

\subsection{Rank Problems have Bounded Sign-Rank}

\begin{boxlemma}[\Cref{thm:rank-problem-main}, Part 1]\label{lemma:signrank-rank}
	If $P$ is a rank problem of order $k$, then
	\[
	\signrank(P) = 2^{O(k\log k)}.
	\]
\end{boxlemma}

We first deal with a special class of simple rank problems, and then reduce general rank problems to this case. Let $g \colon \{0, 1, \dotsc\} \to \zo$ be some function as in the definition of rank problems. We write $\partial g$ for the number of times $g$ changes value. If $\partial g \leq 1$, then we say that the associated rank problem is \emph{monotone}.

\begin{lemma}\label{lemma:suprank-monotone}
	If $P$ is a monotone rank problem of order $k$, then 
	\[
	\min\!\Big(\suprank(P),\ \suprank(\neg P)\Big) \leq 4^k.
	\]
\end{lemma}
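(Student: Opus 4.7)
The plan is to first reduce to the case where $g(t) = \ind{t \geq s}$ for some threshold $s \leq k$. Since $\partial g \leq 1$ and $g$ is constant on $\{k, k+1, \ldots\}$, the function $g$ either is constant on all of $\{0,1,2,\ldots\}$ (in which case $P$ or $\neg P$ is the zero matrix and has support-rank $0$), steps up from $0$ to $1$ at some $s \leq k$, or steps down from $1$ to $0$ at some $s \leq k$. In the step-down case I simply replace $P$ by $\neg P$, which is still a monotone rank problem realized by the same matrices $A(x), B(y)$; this is allowed because the conclusion involves only $\min(\suprank(P), \suprank(\neg P))$. So WLOG $P(x,y) = 1 \iff \rank(A(x) + B(y)) \geq s$.

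Next I would use \cref{rem:rank-domain} together with \cref{lemma:rank-compression} to compress the relevant matrices to size $s \times s$. Applying rank compression to the finite set $\{A(x) + B(y) : x,y \in [N]\}$ with target size $s \times s$ yields a linear map $\Pi$, and setting $A'(x) \coloneqq \Pi(A(x))$ and $B'(y) \coloneqq \Pi(B(y))$ gives $\rank(A'(x) + B'(y)) = \min(\rank(A(x) + B(y)), s)$ by linearity of $\Pi$. Thus the condition $\rank(A(x) + B(y)) \geq s$ becomes $\rank(A'(x) + B'(y)) = s$, which for an $s \times s$ matrix is equivalent to $\det(A'(x) + B'(y)) \neq 0$. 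So the real matrix $M$ with entries $M_{x,y} \coloneqq \det(A'(x) + B'(y))$ has exactly the same support as $P$.

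Finally I invoke \cref{lemma:det-of-sum} to expand $\det(A'(x) + B'(y))$ as a sum of at most $\binom{2s}{s} \leq 4^s$ terms of the form $\pm \det(A'(x)_{|\alpha \times \beta}) \det(B'(y)_{|\bar\alpha \times \bar\beta})$. Each such term is a product of a function of $x$ alone with a function of $y$ alone, hence contributes a rank-$1$ matrix, and summing gives $\rank(M) \leq 4^s \leq 4^k$. Since $M$ and $P$ share support, $\suprank(P) \leq 4^k$, completing the proof. The only place where monotonicity is essential — and the only conceptual step that could have caused trouble — is the reduction to a single determinant vanishing: if $g$ were allowed to jump more than once, a single determinant of a single compressed matrix would not suffice to detect the correct rank threshold, and one would instead need a boolean combination of several such conditions (this is exactly the issue handled in the general case of \cref{thm:rank-problem-main}).
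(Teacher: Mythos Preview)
Your proof is correct and follows essentially the same approach as the paper: reduce WLOG to the threshold $g(t)=\ind{t\geq s}$ (possibly swapping $P$ with $\neg P$), apply rank compression to convert the rank condition into a determinant-nonvanishing condition on $s\times s$ matrices, and then expand $\det(A'(x)+B'(y))$ via \cref{lemma:det-of-sum} into at most $4^s$ terms each separable in $x$ and $y$. The only cosmetic difference is that the paper phrases the final step as an application of the Veronese map (\cref{lemma:veronese}) together with \cref{fact:det}, whereas you directly observe that each term is a rank-$1$ matrix; these are the same argument.
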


\begin{proof}
	Let $P\in\{0,1\}^{N\times N}$ be a monotone rank problem of order $k$. Without loss of generality, replacing $P$ with $\neg P$ if necessary, there are functions $A,B\colon [N] \to \R^{a\times b}$ and some $r\leq \min(k,a,b)$ that satisfy
	\[
	\forall x,y\in[N]\,\colon\qquad P(x,y) = 1 \iff \rank\!\big(A(x)+B(y)\big) \geq r.
	\]
	But now, we finish just like in the proof of \Cref{thm:HD-suprank} using rank compression and the Veronese map: By \Cref{lemma:rank-compression}, there is a rank-preserving linear map $\Pi:\R^{a\times b}\to \R^{r\times r}$ such that
	\begin{align*}
		\forall x,y\in[N]\,\colon\qquad P(x,y) = 1 &\iff \rank\!\big(A(x)+B(y)\big) \geq r\\
		&\iff \rank\!\Big(\Pi(A(x))+\Pi(B(y))\Big) = r\\
		&\iff \det\!\Big(\Pi(A(x))+\Pi(B(y))\Big) \neq 0.
	\end{align*}
	The determinant can be expanded using \Cref{lemma:det-of-sum} to obtain a polynomial. Using the Veronese map from \Cref{lemma:veronese} together with \Cref{fact:det}, we conclude that $\suprank(P)\leq 4^r\leq 4^k$.
\end{proof}

We now show that any rank problem reduces to a bounded number of monotone rank problems.

\begin{lemma}
If $P$ is a rank problem of order $k$, then there exists a family $\cQ$
of monotone rank problems of order $k$, such that
\[
    \D^Q(P) = O(\log k).
\]
\end{lemma}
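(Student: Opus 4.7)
The plan is to reduce a general rank problem to a binary search over threshold queries, each of which is a monotone rank problem built from the same matrix-valued functions $A, B$. Since $P(x,y) = g\bigl(\mathrm{rank}(A(x)+B(y))\bigr)$ and $g$ is constant on inputs $\geq k$, the output is fully determined by the capped value $\min\bigl(\mathrm{rank}(A(x)+B(y)),\,k\bigr) \in \{0,1,\ldots,k\}$. So it suffices to identify this value using monotone rank-problem queries, and there are only $k+1$ possibilities.

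For each threshold $t \in \{1, \ldots, k\}$, I would define the problem
\[
Q_t(x,y) \;\define\; \ind{\rank\!\big(A(x)+B(y)\big) \geq t},
\]
using the same $A, B$ that witness $P$ as a rank problem of order $k$. Each $Q_t$ is manifestly a rank problem of order $k$ (same underlying $A, B$), and its associated threshold indicator $r \mapsto \ind{r \geq t}$ changes value exactly once, so $\partial g \leq 1$. Hence $\cQ \define \{Q_t : t \in [k]\}$ is a family of monotone rank problems of order $k$.

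With these oracles available, the deterministic protocol just performs binary search. Maintain an interval $[\ell, h] \subseteq \{0, 1, \ldots, k\}$ known to contain the capped rank, initialised to $[0,k]$. At each step, query $Q_m$ at $m = \lceil (\ell + h + 1)/2 \rceil$ via the identity $a_v(x) = x$, $b_v(y) = y$; update to $[m,h]$ on answer $1$ and to $[\ell, m-1]$ on answer $0$. After $\lceil \log_2(k+1) \rceil = O(\log k)$ queries the interval collapses to a single value $r$, and the corresponding leaf outputs $g(r)$. This yields $\D^\cQ(P) = O(\log k)$.

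The main obstacle is essentially bookkeeping: one must verify that each threshold query genuinely fits the definition of a monotone rank problem (which it does trivially since we reuse $A, B$) and that the oracle protocol framework allows such queries with the trivial input maps $a_v, b_v = \mathrm{id}$. There is no real mathematical content beyond binary search; the combinatorial structure of rank problems was already set up to make this kind of reduction effortless.
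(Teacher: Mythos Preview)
Your proposal is correct and takes essentially the same approach as the paper: define threshold queries $Q_t$ from the same maps $A,B$ and binary-search the (capped) rank in $O(\log k)$ queries. The paper's proof is a terser version of exactly this argument, and even notes the same refinement to $O(\log \partial g)$ that your setup makes available.
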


\begin{proof}
	For $P\in\{0,1\}^{N\times N}$, let $A,B\colon [N] \to \R^{a\times b}$, and $g\colon \{0,1,2,\dots\}\to \{0,1\}$ constant on $[k,\infty )$, such that
\[
	\forall x,y\in[N]\,\colon\qquad P(x,y) = g\Big(\rank\!\big(A(x)+B(y)\big)\Big).
\]
	Using binary search, $\rank(A(x)+B(y))$ can be determined exactly by a decision tree of depth $O(\log k)$ that makes queries to the family of monotone rank problems of order $k$, given by the same maps $A$ and $B$. Since $P$ only depends on this rank, we conclude that $\D^Q(P) = O(\log k)$.
	
	We note that this bound can be improved to $\D^Q(P) = O(\log \partial g)$, since we do not need to determine $\rank(A(x)+B(y))$ exactly, but only need to determine whether or not it lies in the support of $g$.
\end{proof}

The proof of \Cref{lemma:signrank-rank} is now a simple application of \Cref{lemma:suprank-monotone} and \Cref{lemma:supp-to-sign}.

\subsection{Rank Problems are Closed under Reductions}
\label{section:rank-closure-reductions}

\begin{boxlemma}[\Cref{thm:rank-problem-main}, Part 2]
\label{thm:rank-problem-reduction}
If the problem $P$ is a boolean combination of $q$ rank problems of order $k$
each, then $P$ is a rank problem of order $O(k)^q$.
\end{boxlemma}

\begin{proof}
	Let $P\in \{0,1\}^{N\times N}$ be a problem that is the combination of rank problem instances $Q_1,Q_2,\dots, Q_q$ of order $k$ each. For each $i\in [q]$, if $Q_i\in \{0,1\}^{N_i\times N_i}$, then by \Cref{rem:rank-domain}, there are $A_i,B_i\colon [N_i]\to
    \R^{k\times k}$ and $g_i\colon \{0,1,\dots, k\}\to \{0,1\}$, such that 
\[
	\forall x,y\in[N_i]\,\colon\qquad  Q_i(x,y) = g_i\Big(\rank\!\big(A_i(x)+ B_i(y)\big)\Big).
\]
For each $x\in [N]$, we now define the diagonal block matrix $A(x)$ with $q$
blocks as follows. For each $i\in [q]$, let $x_i$ denote the input to $Q_i$
corresponding to input $x$ of $P$. We put $(k+1)^{i-1}$ many copies of
$A_i(x_i)$ on the diagonal, as illustrated in \cref{fig:rank-reduction}:
	
	\begin{figure}[H]
		\begin{center}
			\begin{tikzpicture}[scale = 0.7]
	
%	\fill[blue!20] (0,4) rectangle (1,3);
%	\fill[blue!20] (1,3) rectangle (2,2);
%	\fill[blue!10] (2,2) rectangle (3,1);
%	\fill[blue!20] (3,1) rectangle (4,0);
%	
%	%	\draw[step = 1, black!20, thin] (0,0) grid (4,4);
%	\draw[thick] (0,0) rectangle (4,4);
%	
%	\node at (0.5, 3.5) {$\Psi_1$};
%	\node at (1.5, 2.5) {$\Psi_2$};
%	\node at (2.5, 1.6) {$\ddots$};
%	\node at (3.5, 0.5) {$\Psi_q$};
%	
	\node at (-1, -1.15) {$A =$};
	
%	\tikzset{shift={(6,0)}}

	%	\draw[step = 1, black!20, thin] (0,0) grid (4,4);
	
	\fill[blue!20] (0,3.90) rectangle (1,3.15);
	\node at (0.5, 3.5) {$A_1$};
	
	\fill[blue!20] (1,3.15) rectangle (2,2.40);
	\node at (1.5, 2.75) {$A_2$};
	\fill[blue!20] (2,2.40) rectangle (3,1.65);
	\node at (2.5, 2) {$A_2$};
	\fill[blue!20] (3,1.65) rectangle (4,0.90);
	\node at (3.5, 1.25) {$\cdots$};
	\fill[blue!20] (4,0.90) rectangle (5,0.15);
	\node at (4.5, 0.50) {$A_2$};

        \draw[dashed] (1,3.15) rectangle (5, 0.15);
	
	\draw [decorate,decoration={brace, amplitude=5pt, mirror, raise=1mm}]
	(1.1,0) -- (4.9,0) node[midway,yshift=-6mm]{$k+1$ copies};
	
	\draw[dashed] (5,0.15) rectangle (10,-1.35);
	\node at (7.5, -0.6) {$\cdots$};
	
	\fill[blue!20] (10,-1.35) rectangle (11,-2.10);
	\node at (10.5, -1.75) {$A_q$};
    
	\fill[blue!20] (11,-2.10) rectangle (12,-2.85);
	\node at (11.5, -2.50) {$A_q$};
    
	\fill[blue!20] (12,-2.85) rectangle (13,-3.60);
	\node at (12.5, -3.25) {$A_q$};
    
	\fill[blue!20] (13,-3.60) rectangle (14,-4.35);
	\node at (13.5, -4.00) {$\cdots$};
    
	\fill[blue!20] (14,-4.35) rectangle (15,-5.10);
	\node at (14.5, -4.75) {$A_q$};
    
	\fill[blue!20] (15,-5.10) rectangle (16,-5.85);
	\node at (15.5, -5.50) {$A_q$};
    
	\draw[dashed] (10,-1.35) rectangle (16,-5.85);
	
	\draw [decorate,decoration={brace, amplitude=5pt, raise=1mm}]
	(10.1,-1.25) -- (15.9,-1.25) node[midway,yshift=6mm,xshift=2mm]{$(k+1)^{q-1}$ copies};
	
	%\draw [decorate,decoration={brace, amplitude=5pt, mirror, raise=2mm}]
	%(20,0) -- (20,4) node[midway,xshift=7mm]{$qk$};
    
	\draw[thick] (0,3.9) rectangle (16,-5.85);
	
	\draw [decorate,decoration={brace, amplitude=5pt, raise=2mm}]
	(0,4) -- (16,4) node[midway,yshift=7mm]{$(k+1)^q-1$};

\end{tikzpicture}
		\end{center}
		\caption{Construction of the matrix $A(x)$ from the matrices $A_i(x_i)$ with $i\in [q]$.}
        \label{fig:rank-reduction}
	\end{figure}
\noindent
We similarly define the diagonal block matrix $B(x)$. Now, we have
\[
    \forall x,y\in[N]\,\colon\qquad \rank\!\big(A(x)+B(y)\big) = \sum_{i = 1}^q (k+1)^{i-1}\rank\!\Big(A_i(x_i)+B_i(y_i)\Big).
\]
	Since $\rank(A_i(x)+B_i(y))\leq k$ for all $i\in[q]$,
\[
	\forall x,y\in[N]\,\colon\qquad  Q_i(x,y) = g_i\Big(\rank\!\big(A_i(x_i)+B_i(y_i)\big)\Big) = g_i\left(\left\lfloor\frac{\rank\!\big(A(x)+B(y)\big)}{(k+1)^{i-1}}\right\rfloor \text{ modulo }k+1\right).
\]
	Thus, $\rank(A(x)+B(y))$ encodes the answer to every query required to
	compute $P(x,y)$ as the digits in its base $k+1$ expansion. We conclude that there is some function $g:\{0,1,\dots\}\to\{0,1\}$ that is constant for inputs at least $(k+1)^q-1$ and such that
	\[
	\forall x,y\in[N]\,\colon\qquad P(x,y) = g\Big(\rank\!\big(A(x)+B(y)\big)\Big).
	\]
	Thus, $P$ is a rank problem of order $(k+1)^q-1 = O(k)^q$.
\end{proof}

\begin{remark} \label{rem:symmetry}
	Since the construction in the above proof preserves symmetry, we also find that symmetric rank problems of constant order are closed under reductions.
\end{remark}

\begin{example}
	Let $S\subseteq \bN$ be a finite set. Consider the problem $P\in\{0,1\}^{2^n\times 2^n}$ of deciding whether two strings $x,y\in \{0,1\}^n$ satisfy $\dist(x,y)\in S$. Note that $P$ can be computed by $O(|S|)$ many queries to $\HD{\geq k}$ with $k\leq 1+\max S$. By \Cref{ex:HD-rank-problem}, all of these problems are rank problems of order $O(\max S)$. Thus, by \Cref{thm:rank-problem-reduction}, it follows that $P$ is a rank problem of order $O(\max S)^{O(|S|)}$.
\end{example}

\subsection{Rank Problems are Closed under Distance-\texorpdfstring{$r$}{r} Compositions}
\label{section:rank-closure-composition}

We now define the distance-$r$ compositions of \cite{FGHH25}. For simplicity, we
present a definition in which each inner problem $P_i$ is boolean-valued,
although \cite{FGHH25} allows an arbitrary constant-size range of values. It is
not difficult to show that the latter can be reduced to queries of the former.

\newcommand{\ldsbracket}{[\![}
\newcommand{\rdsbracket}{]\!]}
\begin{boxdefinition}
\label{def:distance-r-composition}
Fix any $r$ and an \emph{outer} function $h \colon \{0, \dotsc, r\} \to
\zo$. For boolean matrices $P_1, \dotsc, P_m \in \zo^{N \times N}$, we define their
\emph{distance-$r$ composition}
$h \ldsbracket P_1, \dotsc, P_m \rdsbracket \colon [N]^m \times [N]^m \to \zo$
as follows. For any $x,y \in [N]^m$, write $\Delta(x,y) \define \{ i \in [m] \;|\; x_i \neq y_i \}$. Then,
\[
    h\ldsbracket P_1, \dotsc, P_m \rdsbracket (x,y)
    = \begin{cases}
        0 &\text{ if } |\Delta(x,y)| > r, \\
        h\left(\sum_{i \in \Delta(x,y)} P_i(x_i, y_i) \right) &\text{ otherwise.}
    \end{cases}
\]
\end{boxdefinition}

\begin{example}
    The problem $\HD{k}^n$ is the distance-$k$ composition where we take $h(t) =\ind{t=k}$ and take each $P_i$ to be $\neg I_{2,2}$, the negation of the $2 \times 2$ identity matrix.
\end{example}

\begin{example}
    The \textsc{$\{4,4\}$-Hamming Distance} problem (\cref{ex:44-hamming-distance})
    is the distance-$2$ composition where we take $h(t) = \ind{t \leq 2}$
    and take each $P_i$ to be $\HD{\leq 4}^n$.
\end{example}

\begin{boxlemma}[\Cref{thm:rank-problem-main}, Part 3]
\label{thm:rank-problem-composition}
Let $\cP$ be a family of symmetric rank problems of order $k$ and let $P_1, \dotsc, P_n \in \cP$. Then, for any $h \colon \{0, \dotsc, r\} \to \zo$, the distance-$r$ composition
\[
	P = h\ldsbracket P_1, \dotsc, P_n \rdsbracket
\]
is a symmetric rank problem of order $O(rk)^{O(rk^2)}$.
\end{boxlemma}

\begin{proof}
We may assume without loss of generality that each $P_i$ is an $N \times N$
matrix. The goal is to write $P$ as a boolean combination of $O(rk^2)$ many rank
problems, each of order $O(rk)$; from there, the conclusion holds by
\cref{thm:rank-problem-reduction}. First consider the case where $h =
\mathbbm{1}$ is the constant $1$ function. In this case
\[
  \forall x,y\in[N]^n \;\colon\qquad
    P(x,y) = \mathbbm{1} \ldsbracket P_1, \dotsc, P_n \rdsbracket (x,y) = 1 \iff |\{i\in [n] \mid x_i\neq y_i\}| \leq r .
\]
This is just the Hamming distance problem $\neg \HD{\geq r+1}^n$ over the  alphabet
$\Sigma = \{0,1\}^N$, which is a rank problem of order $k+1$ (see
\cref{remark:large-alphabet}). We now prove the statement for general $h$. Let $g \colon
\{0,1,2,\dotsc\}\to \{0,1\}$ and $A_i(x) \in\R^{a_i\times b_i}$ for $i\in [M]$
be such that $g$ is constant on inputs at least $k$, and
\[
	\forall x_i, y_i \in[N] \;\colon\qquad P_i(x_i, y_i) = g\Big(\rank\!\big(A_i(x_i)-A_i(y_i)\big)\Big).
\]
We wish to recover the multiset $\big\{ \rank\!\big(A_i(x_i) - A_i(y_i)\big) \;\mid\; i \in [n],\ 
	x_i \neq y_i \big\}$.
Due to the simple fact stated below, it suffices to determine the values
\begin{equation}
\label{eq:capped-multiset}
    \forall t \in [k] \;\colon\qquad \sum_{i=1}^n \min\Big( \rank\!\big(A_i(x_i) - A_i(y_i)\big),\ t \Big).
\end{equation}

\begin{fact}\label{fact:multiset}
Let $U$ and $V$ be two multisets whose elements are integers from $\{0, \dotsc, s\}$. If for all $t \in [s]$ we have $\sum_{u \in U} \min(u, t) = \sum_{v \in V} \min(v, t)$, then necessarily $U = V$.
\end{fact}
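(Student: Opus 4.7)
The plan is to recover each multiplicity in $U$ and $V$ by a double discrete-differencing of the ``summary function'' $F_W(t) \define \sum_{w \in W}\min(w,t)$. First, I would verify the key identity $\min(w,t) - \min(w,t-1) = \ind{w \geq t}$, which is valid for all integers $w \geq 0$ and all $t \geq 1$. Summing this over $w \in W$ yields
\[
F_W(t) - F_W(t-1) \;=\; |\{w \in W : w \geq t\}|.
\]
Since $F_W(0) = 0$ (as $\min(w,0) = 0$ on the nonnegative integers), the hypothesis $F_U(t) = F_V(t)$ for all $t \in [s]$ telescopes into equality of all tail counts $|\{u \in U : u \geq t\}| = |\{v \in V : v \geq t\}|$ for every $t \in \{1,\dots,s\}$.

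Second, I would extract the multiplicities from the tail counts via a second differencing: for $j \in \{1,\dots,s-1\}$,
\[
|\{w \in W : w = j\}| \;=\; |\{w \in W : w \geq j\}| - |\{w \in W : w \geq j+1\}|,
\]
and $|\{w \in W : w = s\}| = |\{w \in W : w \geq s\}|$ since all elements are bounded by $s$. The previous step then gives $|\{u \in U : u = j\}| = |\{v \in V : v = j\}|$ for every $j \in \{1,\dots,s\}$. Combined with $|U| = |V|$, the multiplicity of $0$ also agrees, yielding $U = V$.

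The only subtle point is the $j = 0$ edge case: values $w = 0$ contribute nothing to any $F_W(t)$, so the summary alone cannot distinguish $U$ from $U \cup \{0\}$. The fact therefore tacitly requires $|U| = |V|$, a condition that is automatic in its intended use within \Cref{thm:rank-problem-composition}, since the multiset in question has size $|\Delta(x,y)|$, which is already determined by the players' input difference pattern (and in particular is a quantity that can be read off from the larger rank-problem construction before invoking the fact).
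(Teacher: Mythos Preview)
Your double-differencing argument is correct and is exactly the kind of elementary proof the paper has in mind; the paper itself states \cref{fact:multiset} without proof, calling it a ``simple fact,'' so there is nothing to compare against.

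Your observation about the $j=0$ edge case is also correct and worth flagging: the fact as literally stated is false (e.g.\ $U=\{0\}$ versus $V=\emptyset$), since zeros are invisible to every $F_W(t)$ with $t\in[s]$. Your proposed remedy---that in the application the multiset has a size that can be determined independently---is the right idea, though your justification is slightly imprecise. The paper's construction only checks whether $|\Delta(x,y)|\le r$ via $\mathbbm{1}\ldsbracket P_1,\dots,P_n\rdsbracket$, not its exact value; and even knowing $|\Delta(x,y)|$ one must still argue that the number of zeros \emph{inside} $\Delta$ (i.e.\ indices $i$ with $x_i\neq y_i$ but $\rank(A_i(x_i)-A_i(y_i))=0$) is recoverable. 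Both are easily handled: $|\Delta(x,y)|$ is itself a Hamming distance and hence a rank problem of order $r+1$ (\cref{remark:large-alphabet}), and once $|\Delta|$ is known the zero-count inside $\Delta$ equals $|\Delta|$ minus the number of nonzero ranks, which your argument already determines. So the application goes through with at most a constant number of additional rank-problem queries, and the paper's conclusion stands.
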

Now fix any $t \in [k]$. We determine the value of \cref{eq:capped-multiset} by constructing a diagonal block matrix whose blocks are compressions of the matrices $A_i(x_i)$ for $i\in [n]$, capping their ranks at $t$, as follows. 

For every $i\in[n]$, by \Cref{lemma:rank-compression}, there is a linear map $\Pi^{(t)}_i \colon \R^{a_i\times b_i}\to \R^{t \times t}$ such
that
\[
    \forall x_i, y_i \in[N] \;\colon\qquad
        \rank\!\Big(\Pi^{(t)}_i\big(A_i(x_i)-A_i(y_i)\big)\Big)
        = \min\!\Big(\rank\!\big(A_i(x_i)-A_i(y_i)\big),\ t\Big).
\]
For every $x \in [N]^n$ and all $i\in [n]$, define $B^{(t)}_i(x_i) =
\Pi^{(t)}_i(A_i(x_i)) \in \R^{t\times t}$ and construct the diagonal block matrix $B^{(t)}(x)\in \R^{nt \times nt}$ whose blocks are given by $B^{(t)}_i(x_i)$, illustrated in \Cref{fig:composition}.
\begin{figure}[t]
	\begin{center}
		\begin{tikzpicture}[scale = 1.4]
		
	\fill[blue!20] (0,4) rectangle (1,3);
	\fill[blue!20] (1,3) rectangle (2,2);
	\fill[blue!10] (2,2) rectangle (3,1);
	\fill[blue!20] (3,1) rectangle (4,0);
	
	\draw[thick] (0,0) rectangle (4,4);
	
	\node at (0.5, 3.5) {$B^{(t)}_1(x_1)$};
	\node at (1.5, 2.5) {$B^{(t)}_2(x_2)$};
	\node at (2.5, 1.6) {$\ddots$};
	\node at (3.5, 0.5) {$B^{(t)}_n(x_n)$};
	
	\node at (-0.8, 2) {$B^{(t)}(x)\ =$};
	
	\draw [decorate,decoration={brace, amplitude=5pt, raise=2mm}]
	(0,4) -- (4,4) node[midway, yshift=7mm]{$nt$};
	
	\draw [decorate,decoration={brace, amplitude=5pt, raise=2mm}]
	(4,4) -- (4,0) node[midway, xshift=7mm]{$nt$};
	
\end{tikzpicture}
		\caption{Construction of $B^{(t)}(x)$ from the $B^{(t)}_i(x_i)$ for $i\in [n]$.}
		\label{fig:composition}
	\end{center}
\end{figure}
By \Cref{lemma:rank-compression}, there is a linear map $\Pi^{(t)}\colon \R^{nt\times
nt}\to \R^{rt\times rt}$ that satisfies
\[
	\forall x,y\in[N]^n\,\colon\qquad \rank\!\Big(\Pi^{(t)}\big(B^{(t)}(x)-B^{(t)}(y)\big)\Big)
            = \min\!\Big(\rank\!\big(B^{(t)}(x)-B^{(t)}(y)\big),\ rt \Big).
\]
Now set $A^{(t)}(x) = \Pi^{(t)}(B^{(t)}(x))$ for all $x\in [N]^n$. We claim that $P(x,y)$ is a function of
\[
	\mathbbm{1}\ldsbracket P_1, \dotsc, P_n \rdsbracket (x,y)
            \quad\text{and}\quad
        \rank\big(A^{(t)}(x)-A^{(t)}(y)\big) \quad \text{for} \quad 1\leq t\leq k.
\]
If $\mathbbm{1}\ldsbracket P_1, \dotsc, P_n \rdsbracket (x,y) = 0$, then $P(x,y)
= 0$. Else, we must have $|\{i\in [M]\mid x_i\neq y_i\}| \leq r$ which implies
that for all $t \in [k]$, $\rank(B^{(t)}(x)-B^{(t)}(y)) \leq rt$ and thus
\[
 \forall x,y\in[N]^n \;\colon\quad \rank\!\Big(A^{(t)}(x)-A^{(t)}(y)\Big)
    = \rank\!\Big(B^{(t)}(x)-B^{(t)}(y)\Big)
    = \sum_{i = 1}^n \min\!\Big(\rank\!\big(A_i(x)-A_i(y)\big),\ t\Big).
\]

Now for fixed $x,y\in [N]^m$, the above expression is a function of all the $O(rk)$ many possible monotone rank problems that correspond to the map $A^{(t)}$, each of order $rt$.
Doing this for all $t\in [k]$ and using \Cref{fact:multiset}, we deduce that the multiset $\{\rank(A_i(x)-A_i(y))\mid i\in [M]\}$ is a boolean combination of $O(rk^2)$ many rank problems of order $rk$ each. But then the same is true for the value of
\[
	P(x,y) = h\!\left(\sum_{i \in \Delta(x,y)}P_i(x_i,y_i)\right) = h\!\left(\sum_{i \in \Delta(x,y)} g\Big(\rank\!\big(A_i(x)-A_i(y)\big)\Big)\right).
\]
Thus, we have expressed $P$ as a boolean combination of the rank problem $\mathbbm{1}\ldsbracket P_1, \dotsc, P_n \rdsbracket$ of order $k+1$, together with $O(rk^2)$ many symmetric rank problems of order $rk$ each. By \Cref{thm:rank-problem-reduction} and \Cref{rem:symmetry}, we conclude that $P$ is a symmetric rank problem of order $O(rk)^{O(rk^2)}$.
\end{proof}

\section{What's SUPP? Complexity Classes and their Relations}
\label{section:whatsupp}

\subsection{New Classes}
One of the main ideas that allowed us to get general upper bounds on sign-rank
was the use of support-rank as an intermediate step, and we have shown that all
known communication problems $\cP$ with $\Rand(\cP) = O(1)$ can be computed by
queries to matrices of constant support-rank. This suggests the following new
communication complexity classes:

\begin{description}[leftmargin=!,labelwidth=1.4cm,labelsep=0.5em]
    \item[$\SUPP$:]
    The class of communication problems $\cP$ with constant support-rank,
    $\suprank(\cP) = O(1)$.
    \item[$\coSUPP$:]
    The class of communication problems $\cP$ whose negation is in $\SUPP$.
    \item[$\P^{\SUPP}$:]
    The class of communication problems $\cP$ which can be computed by
    a constant-cost deterministic protocol with access to an oracle $\cQ \in \SUPP$.
\end{description}

\noindent
We note that several recent works \cite{CLV19,CHHS23,CHHNPS25,Tom25} have implicitly studied problems
in $\SUPP$: the \textsc{Integer Inner Product} functions, which belong to $\SUPP$ by definition (the
problem asks to decide if integer vectors $u, v \in [-M,M]^d$ are orthogonal in fixed dimension
$d$). For example, \cite{CLV19} proved that these problems form an infinite hierarchy and have
efficient (but still super-constant) randomized protocols.

In terms of these complexity classes, we may state non-quantitative versions of our results as:
\begin{enumerate}
    \item For all constant $k$, $\HD{\geq k} \in \SUPP$ (\cref{thm:HD-suprank}).
    \item $\P^{\SUPP} \subseteq \UPP_0$, the class of problems with constant sign-rank,
    \ie constant unbounded-error communication cost (\cref{lemma:supp-to-sign})
    \item Therefore for all constant $k$, $\P^{\HD{k}} \in \UPP_0$.
    \item Constant-order rank problems, including all $\BPP_0$ (constant-cost bounded-error randomized communication) problems from \cite{FGHH25}, are in $\P^{\SUPP} \subseteq \UPP_0$ (\cref{section:rank-problems}).
\end{enumerate}
\noindent
These new classes might informally be described as ``one-sided'' versions of $\UPP_0$.
By analogy, one might consider similar ``one-sided'' versions of constant margin,
which we call ``support margin'':

\begin{description}[leftmargin=!,labelwidth=1.4cm,labelsep=0.5em]
    \item[$\SMAR$:]
    The class of communication problems $\cP$ for which there exists a constant $\gamma > 0$
    such that for all $P \in \cP$, there exist unit vectors $u_1, \dotsc, u_N, v_1, \dotsc, v_N$
    (in arbitrary dimension) such that
    \begin{equation}
    \label{eq:def-smar}
    \begin{aligned}
        P(i,j) = 0 &\implies \inn{u_i, v_j} = 0, \\
        P(i,j) = 1 &\implies |\inn{u_i, v_j}| > \gamma .
    \end{aligned}
    \end{equation}
    \item[$\coSMAR$:]
    The class of communication problems $\cP$ whose negation is in $\SMAR$.
\end{description}
\noindent
It is not immediately obvious whether these classes are as interesting as $\SUPP$,
but we will see below that they are non-trivial classes which contain interesting problems,
in particular the problems of constant $\gamma_2$-norm.

\subsection{Relations Between Classes}

\begin{figure}[H]
    \begin{center}
        \begin{tikzpicture}[xscale = 1.5]
	
	\tikzset{every path/.style={->, thick}}
	
%    \node[vert] (p) at (0, 0) {$\SUPP \cap \coSUPP = \P = \RP \cap \coRP$};
	\node[vert, fill = black!10] (suppco) at (-1.8, 0) {$\SUPP \cap \coSUPP$};
	\node[circle, draw = red] at (-0.52, 0) {$=$};
	\node[vert] (p) at (0, 0) {$\P$};
	\node at (0.52, 0) {$=$};
	\node[vert] (rpco) at (1.5, 0) {$\RP \cap \coRP$};

    \node[vert, fill = black!10] (supp) at (-4.5, 2) {$\SUPP$};
    \node[vert, fill = black!10] (cosupp) at (-1.5, 2) {$\coSUPP$};
    \node[vert, fill = black!10] (psupp) at (-3, 4) {$\P^{\SUPP}$};
    \node[vert] (upp) at (-3, 8) {$\UPP$};

    \node[vert] (rp) at (1.5, 2) {$\RP$};
    \node[vert] (corp) at (4.5, 2) {$\coRP$};
    \node[vert, fill = black!10] (cosmar) at (1.5, 8) {$\SMAR$};
    \node[vert, fill = black!10] (smar) at (4.5, 8) {$\coSMAR$};
    \node[vert] (bpp) at (3, 10) {$\BPP$};

    \node[vert] (peq) at (0, 2) {$\P^{\EQ}$};
    \node[vert] (prp) at (3, 4) {$\P^{\RP}$};

    \node[vert, fill = black!10] (smarco) at (3, 6) {$\SMAR \cap \coSMAR$};

    \node[vert, fill = black!10] (gam2) at (0, 4) {$\Gamma_2$};

    \draw (suppco) -- (supp);
    \draw (suppco) -- (cosupp);
    \draw (cosupp) -- (psupp);
    \draw (supp) -- (psupp);
    \draw (psupp) -- (upp) node[circle, midway, fill = white, draw = green!80!gray] {$\neq$};

    \draw (rpco) -- (rp);
    \draw (rpco) -- (corp);
    \draw[brown] (rp) -- (cosmar);
    \draw[brown] (corp) -- (smar);
    \draw (p) -- (peq);
    \draw (smarco) -- (cosmar);
    \draw (smarco) -- (smar);
    \draw (smar) -- (bpp);
    \draw (cosmar) -- (bpp);

    \draw (peq) -- (gam2);
    \draw[blue] (gam2) to[out = 20, in = -130] (smarco);
    \draw (rp) -- (prp);
    \draw (corp) -- (prp);
    
    \draw (peq) -- (psupp);
    \draw (peq) -- (prp);
    
    \draw (prp) to[out = 30, in = -70] (bpp);
    \draw[thin, dashed] (bpp) to[out = -110, in = 150]  node[circle, midway, fill = white, draw = cyan, solid, thick] {$?$} (prp);
    
    \draw[thin, dashed] (gam2) to[out = -50, in = 60]  node[circle, midway, fill = white, draw = cyan, solid, thick] {$?$} (peq);

    \draw[thin, dashed] (gam2) -- (upp)
    	node[circle, midway, fill = white, draw = cyan, solid, thick] {$?$};

    \node at (0, 6.9) {};
    
	\node[minimum size=5pt, fill = green!70!gray, label = right:{\Cref{prop:psup-nin-upp}}] at (-4, -1) {};
	\node[minimum size=5pt, fill = red!60, label = right:{\Cref{prop:supp-cap-cosupp}}] at (-1, -1) {};
	\node[minimum size=5pt, fill = blue!60, label = right:{\Cref{prop:gamma-in-smar}}] at (2, -1) {};
	\node[minimum size=5pt, fill = brown!70, label = right:{\Cref{prop:rp-in-smar}}] at (-4, -1.8) {};
	\node[minimum size=5pt, fill = cyan!60, label = right:{The three open problems stated in \Cref{sec:open-problems}}] at (-1, -1.8) {};

\end{tikzpicture}
    \end{center}
	\vspace{-5mm}
    \caption{Hierarchy of constant-cost communication classes (with typical subscript $0$ dropped). New classes are shaded, and the relations we discuss in this paper are highlighted in color.}
    \label{fig:hierarchy}
\end{figure}

Let us discuss how these classes fit within the relevant landscape of
constant-cost communication complexity classes, as described in \cite{HH24}.
Communication complexity classes are sometimes written with the superscript
\emph{cc}, \eg $\BPP^\emph{cc}$, but we drop this superscript for simplicity.
The emerging convention is to write constant-cost communication classes with the
subscript $0$, \eg $\BPP_0$, by analogy to constant-depth circuit complexity
classes like $\mathsf{TC}^0$ or $\mathsf{AC}^0$, but we will also drop this subscript
since we are only concerned with the constant-cost classes. Some standard classes
are as follows, and are shown with their relations in \cref{fig:hierarchy}.

\newcommand{\ROne}{\mathsf{R}_1}
\begin{description}[leftmargin=!,labelwidth=1.4cm,labelsep=0.5em]
    \item[$\BPP$:]
    The class of communication problems $\cP$ with $\Rand(\cP)
    = O(1)$, \ie problems with constant-cost randomized public-coin
    bounded-error protocols, or equivalently families of boolean matrices with
    constant margin.
    \item[$\RP$:]
    The class of communication problems $\cP$ with $\ROne(\cP) = O(1)$,
    where $\ROne(P)$ denotes the optimal cost of a public-coin randomized protocol
    with bounded \emph{one-sided error} (\ie its output is correct with probability 1
    on inputs $x,y$ with $P(x,y) = 0$).
    \item[$\UPP$:]
    The class of communication problems $\cP$ with $\U(\cP) =
    O(1)$, \ie problems with constant-cost randomized private-coin
    unbounded-error protocols, or equivalently families of boolean matrices
    with constant sign-rank.
    \item[$\P$:]
    The class of communication problems $\cP$ with $\D(\cP) = O(1)$,
    \ie problems with constant-cost \emph{deterministic} protocols.
    \item[$\P^\EQ$:]
    The class of communication problems $\cP$ with $\D^\EQ(\cP) = O(1)$,
    \ie problems with constant-cost deterministic protocols that have access to
    an \textsc{Equality} oracle.
    \item[$\P^{\RP}$:]
    The class of communication problems $\cP$ with $\D^\cQ(\cP) = O(1)$
    for some $\cQ \in \RP$.
    \item[$\Gamma_2$:]
    The class of communication problems $\cP$ with constant $\gamma_2$-norm
    (\cref{def:gamma-2}).
\end{description}

Below, we establish some relationships between classes to fill in \cref{fig:hierarchy}.

\subsubsection{The Classes $\SUPP$, $\coSUPP$, and $\UPP$}

\begin{boxproposition}\label{prop:psup-nin-upp}
    $\P^{\SUPP} \subsetneq \UPP$.
\end{boxproposition}
The proof follows a similar Ramsey-theoretic strategy as in \cite{FHHH24}.
\begin{proof}
We show that the \textsc{Greater-Than} problem (which has sign-rank 2 as in
\cref{fig:identity}) does not belong to the class $\P^{\SUPP}$. Suppose for the sake of contradiction
that there exist constants $q, s$ such that for all $N$, the $N \times N$ \textsc{Greater-Than}
matrix $\GEQ_N \in \zo^{N \times N}$ can be written as
\[
    \GEQ_N = \Gamma( Q_1, \dotsc, Q_q )
\]
where $\Gamma \colon \zo^q \to \zo$ is applied entry-wise to the matrices $Q_i
\in \zo^{N \times N}$, and each $Q_i$ satisfies $\suprank(Q_i) \leq s$. Consider an
auxiliary complete graph $G$ on vertices $[N]$ where each edge $\{x,y\} \in
\binom{[N]}{2}$ with $x < y$ is assigned the color
\[
    \mathsf{col}(\{x,y\}) \define \left( Q_i(x,y), Q_i(y,x) \right)_{i \in [q]} .
\]
In other words, the color of $\{x,y\}$ is the vector in $\zo^{2q}$, made up of entries in
$Q_i$ for row $x$ and column $y$, as well as row $y$ and column $x$. Ramsey's
theorem guarantees that for any $n \in \bN$ there exists sufficiently large $N$
such that there is a set $T \subseteq [N]$ of size $|T| = n$ such that all edges
$\{x,y\} \in \binom{T}{2}$ have the same color. Therefore, all $x,y \in T$ with
$x < y$ share the same matrix entries $Q_i(x,y)$ and $Q_i(y,x)$. In particular,
for each $i \in [q]$ there are $b_i, b'_i \in \zo$ such that
\[
    \forall x,y \in T \text{ such that } x < y \;\colon\qquad
    Q_i(x,y) = b_i \text{ and } Q_i(y,x) = b'_i .
\]
We argue that there must exist some $i \in [q]$ such that $b_i \neq b'_i$. If this were
not the case, then for every $x,y \in T$ with $x < y$, we have
\[
    \GEQ_N(x,y) = \Gamma(Q_1(x,y), \dotsc, Q_q(x,y))
    = \Gamma(Q_1(y,x), \dotsc, Q_q(y,x)) = \GEQ_N(y,x) ,
\]
a contradiction. Therefore, we have $Q_i$ such that
\[
Q_i(x,y) = \begin{cases}
    b_i &\text{ if } x < y \\
    \neg b_i &\text{ if } x > y .
\end{cases}
\]
Then, $Q_i$ contains an $n \times n$ submatrix on the rows and columns in $T$,
which has all 1s in the upper triangle and all 0s in
the lower triangle, or vice versa. This submatrix therefore has support-rank at least $n-1$.
Since this can be found for all values of $n$, we reach a contradiction.
\end{proof}

\begin{boxproposition}\label{prop:supp-cap-cosupp}
$\SUPP\ \cap\ \coSUPP = \P$.
\end{boxproposition}
\begin{proof}
It is well-known that the class $\P$ consists of exactly the communication
problems with constant rank. Then, $\P \subseteq \SUPP \cap \coSUPP$, so we must
only show that $\SUPP \cap \coSUPP \subseteq \P$. Let $\cP$ be any communication
problem in $\SUPP \cap \coSUPP$. There exists a constant $s$ such that
for all $P \in \cP$, $\suprank(P) \leq s$ and $\suprank(\neg P) \leq s$. Lemma 3.6 of \cite{HHH23} shows that for each $N$ there is some $r$ such that
every boolean matrix $M \in \zo^{N \times N}$ with $\rank(M) \geq r$ contains an
$N \times N$ submatrix isomorphic to one of
\begin{equation}
\label{eq:hhh-forbidden}
    \EQ^N,\ \neg \EQ^N,\ \GEQ^N,\ \neg \GEQ^N .
\end{equation}
Suppose for the sake of contradiction that $\cP \notin \P$, so that for every
$r$ there exists $P \in \cP$ with $\rank(P) \geq r$. Then for every $N > s$
there exists $P \in \cP$ containing one of the matrices in
\cref{eq:hhh-forbidden} as a submatrix. But for each of these matrices, either it or its
complement has support-rank at least $N > s$, contradiction.
\end{proof}

\subsubsection{The Classes $\SMAR$, $\coSMAR$, and $\Gamma_2$}

It is now necessary to define the $\gamma_2$-norm and the class $\Gamma_2$.
\begin{definition}[$\gamma_2$-Norm and $\Gamma_2$]
\label{def:gamma-2}
Let $M \in \bR^{N \times N}$. The $\gamma_2$-norm is defined as
\[
    \gamma_2(M) \define \min_{UV = M} \|U\|_\text{row} \|V\|_\text{col} ,
\]
where the minimum is over matrices $U,V$ with $UV = M$, and $\|U\|_\text{row}$
is the maximum $\ell_2$-norm of any row of $U$, while $\|V\|_\text{col}$
is the maximum $\ell_2$-norm of any column of $V$. In other words, $\gamma_2(M)$ is the smallest
$\lambda > 0$ for which there exist real vectors $u_i, v_j$ satisfying $\|u_i\|_2, \|v_j\|_2 \leq \lambda$ and
\[
    \forall i,j \in [N] \;\colon\qquad M(i,j) = \inn{u_i, v_j}
\]
We will write $\Gamma_2$ for the class
of all communication problems $\cP$ with $\gamma_2(P) \leq \lambda$ for all
$P \in \cP$, where $\lambda$ is some constant only depending on $\cP$.
\end{definition}
In \cref{fig:hierarchy} we have shown that $\P^\EQ \subseteq \Gamma_2$, which was
proved by \cite{HHH22}. We will require the fact that $\Gamma_2$ is closed under negation:

\begin{fact}
    If $\cP \in \Gamma_2$ then $\neg \cP \in \Gamma_2$.
\end{fact}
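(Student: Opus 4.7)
The plan is to reduce the problem to the well-known fact that $\gamma_2$ is a norm on real matrices, and then observe that negating a boolean matrix amounts to subtracting it from the all-ones matrix $J$, which has very small $\gamma_2$. Concretely, I would proceed in three short steps.

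First, I would note that $J = \mathbf{1}\mathbf{1}^\top$ is a rank-one factorization of the all-ones matrix whose maximum row norm and maximum column norm are both $1$, hence $\gamma_2(J) \leq 1$. Second, I would invoke (or quickly re-derive) the triangle inequality $\gamma_2(M_1 + M_2) \leq \gamma_2(M_1) + \gamma_2(M_2)$: given factorizations $M_i = U_i V_i$ rescaled so that $\|U_i\|_{\text{row}} = \|V_i\|_{\text{col}} = \sqrt{\gamma_2(M_i)}$, the horizontal/vertical concatenation $M_1 + M_2 = [U_1 \ U_2]\bigl[\begin{smallmatrix} V_1 \\ V_2 \end{smallmatrix}\bigr]$ has row-norm squared at most $\gamma_2(M_1)+\gamma_2(M_2)$ and the same for its column-norm, yielding the bound. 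Third, for any boolean matrix $P$, write $\neg P = J - P$ and combine: $\gamma_2(\neg P) \leq \gamma_2(J) + \gamma_2(P) \leq 1 + \gamma_2(P)$.

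To close the proof: if $\cP \in \Gamma_2$ is witnessed by a uniform bound $\gamma_2(P) \leq \lambda$ for all $P \in \cP$, then the inequality above gives $\gamma_2(\neg P) \leq 1 + \lambda$, which is again a constant depending only on $\cP$, so $\neg \cP \in \Gamma_2$. An equivalent construction avoids invoking the norm axioms entirely: starting from any factorization $P(i,j) = \langle u_i, v_j\rangle$ with bounded norms, augment to $u_i' = (1, -u_i)$ and $v_j' = (1, v_j)$, so that $\langle u_i', v_j'\rangle = 1 - P(i,j) = \neg P(i,j)$ while the norms grow by at most $\sqrt{1 + \lambda^2}$ in the second formulation of \cref{def:gamma-2}. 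There is no real obstacle here; the only small point to keep straight is the discrepancy between the two formulations of $\gamma_2$ in \cref{def:gamma-2} (which differ by a square), but this is immaterial to constant-vs-non-constant membership in $\Gamma_2$.
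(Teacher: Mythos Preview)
Your proposal is correct and essentially matches the paper's proof: the paper uses exactly the explicit augmentation $u_i' = (1, u_i)$, $v_j' = (1, -v_j)$ that you describe at the end, obtaining the bound $\gamma_2(\neg P) \leq \sqrt{1+\lambda^2}$ (under the second formulation of \cref{def:gamma-2}). Your triangle-inequality framing is just a slight repackaging of the same construction, and you even flag the square discrepancy between the two formulations in \cref{def:gamma-2}, which the paper glosses over.
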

\begin{proof}
    Fix a constant $\lambda > 0$ such that $\gamma_2(P) \leq \lambda$ for all $P
    \in \cP$. Take any $P \in \cP$ and write $P = UV$ where $\|U\|_\text{row},
    \|V\|_\text{col} \leq \lambda$. Now, $\neg P = J-UV$, where $J$ is the all-1s matrix.
    Writing $u_i$ for the $i^{th}$ row of $U$ and $v_j$ for the $j^{th}$ column of $V$, we have
    \[
        \forall i,j \;\colon\qquad \neg P(i,j) = 1 - \inn{u_i, v_j}
        = \inn{(1, u_i), (1, - v_j)} ,
    \]
    where $\|(1,u_i)\|_2^2, \|(1, -v_j\|_2^2 \leq 1 + \lambda^2$. We conclude that
    $\gamma_2(\neg P) \leq \sqrt{1+\lambda^2}$ for all $P \in \cP$.
\end{proof}

\begin{boxproposition}\label{prop:gamma-in-smar}
    $\Gamma_2 \subseteq \SMAR \cap \coSMAR$.
\end{boxproposition}
\begin{proof}
Since $\Gamma_2$ is closed under negation, it suffices to show $\Gamma_2
\subseteq \SMAR$. Let $\cP \in \Gamma_2$ and let $\lambda > 0$ be a constant
with $\gamma_2(P) \leq \lambda$ for all $P \in \cP$. For any $P \in \cP$,
write $P = UV$ with $\|U\|_\text{row}, \|V\|_\text{col} \leq \lambda$,
and write $u_i$ and $v_j$ for the $i^{th}$ row of $U$ and $j^{th}$ column of $V$,
respectively. Then, we have
\[
    P(i,j) = 0 \implies \frac{\inn{u_i, v_j}}{\|u_i\|_2 \|v_j\|_2 } = 0 \qquad\text{and}\qquad
    P(i,j) = 1 \implies \frac{ |\inn{u_i, v_j}| }{\|u_i\|_2 \|v_j\|_2 }
            \geq \frac{1}{\lambda^2}.
\]
Thus, the normalized vectors witness that $\cP \in \SMAR$ with constant $\lambda^{-2}$.
\end{proof}

\begin{boxproposition} \label{prop:rp-in-smar}
    $\RP \subseteq \SMAR$ and $\coRP \subseteq \coSMAR$.
\end{boxproposition}
\begin{proof}
It suffices to show that $\RP \subseteq \SMAR$. Let $\cP \in \RP$, so that for some
constant $c > 0$, every $P \in \cP$ has a one-sided error randomized protocol
with cost $c$. We may assume without loss of generality that the protocol
succeeds with probability at least $1/2$, and also that the protocol is one-way, \ie Alice
sends a single message to Bob, who then produces the output (this assumption
holds because we are interested only in constant cost; see \eg \cite{HWZ22}).

For each random seed $r$ and inputs $x,y$, let $a_r(x) \in \zo^c$ be the message
which Alice would send given input $x$ and random seed $r$ and let $B_r(y)
\subseteq \zo^c$ be the subset of messages on which Bob would output 1. Since
the protocol has one-sided error, for random $\bm r$, we have
\begin{align*}
    P(x,y) = 0 &\implies \Pr{ a_{\bm r}(x) \in B_{\bm r}(y) } = 0, \\
    P(x,y) = 1 &\implies \Pr{ a_{\bm r}(x) \in B_{\bm r}(y) } \geq 1/2  .
\end{align*}
We may assume that $\bm r$ is drawn uniformly from a finite universe $\zo^R$.
For a subset $S \subseteq \zo^c$ of messages, let $\chi_S \in \zo^{2^c}$
be the vector indicating membership in $S$.
Now for each $x,y$, we may construct vectors $u_x, v_y \in \zo^{2^{c+r}}$ as the
concatenations
\begin{align*}
    u_x \define \left( \chi_{\{a_r(x)\}} \right)_{r \in \zo^R}  
    \qquad\text{ and }\qquad
    v_y \define \left( \chi_{B_r(y)} \right)_{r \in \zo^R}.
\end{align*}
They have the property that
\[
    \inn{u_x, v_y}
    = 2^R \cdot \Pru{\bm r}{ \ind{ a_{\bm r}(x) \in B_{\bm r}(y) } }
    \begin{cases}
        = 0 &\text{ if } P(x,y) = 0 \\
        \geq \tfrac{1}{2} 2^R &\text{ if } P(x,y) = 1 .
    \end{cases}
\]
Moreover, the vectors have $\ell_2$-norm at most $\sqrt{2^{R+c}}$, so normalizing these
vectors gives a margin of at least $2^{-c-1}$ in the case $P(x,y) = 1$.
\end{proof}

\begin{boxproposition}
    $\P^{\SMAR \cap \coSMAR} = \SMAR \cap \coSMAR$.
\end{boxproposition}
\begin{proof}
Since $\SMAR \cap \coSMAR$ is closed under negations, it suffices to show that
$\SMAR$ is closed under OR, so that $\coSMAR$ is closed under AND and their
intersection is closed under all boolean operations. We now let $\cP, \cQ \in \SMAR$ so
that there exists a constant $\gamma > 0$ such that all $\cP$ and $\cQ$ satisfy
\cref{eq:def-smar} with constant $\gamma$. For any $N \times N$ matrices $P \in
\cP_N$ and $Q \in \cQ_N$ and each $i,j \in [N]$, let $u_i, v_j$ be the unit vectors
witnessing \cref{eq:def-smar} for $P$, and define $u'_i, v'_j$ similarly for $Q$.

The normalized concatenations $\left(\tfrac{\sqrt \gamma}{\sqrt 2} u_i, u'_i\right)$ and
$\left(\tfrac{\sqrt{\gamma}}{\sqrt 2}v_j, v'_j\right)$, both with $\ell_2$-norm $\sqrt{1 + \gamma/2}$, satisfy
\begin{align*}
    (P \vee Q)(i,j) = 0 &\implies \inn{\left(\sqrt{\tfrac{\gamma}{2}} u_i, u'_i\right), \left( \sqrt{\tfrac{\gamma}{2}} v_j, v'_j\right) } = 0, \\
    (P \vee Q)(i,j) = 1 &\implies
    \left|\inn{\left(\sqrt{\tfrac{\gamma}{2}} u_i, u'_i\right), \left( \sqrt{\tfrac{\gamma}{2}} v_j, v'_j\right) }\right| =
    \left| \tfrac{\gamma}{2} \inn{u_i, v_j} + \inn{u'_i, v'_j} \right| \geq \min\!\left( \frac{\gamma^2}{2}, \frac{\gamma(1-\gamma)}{2}\right),
\end{align*}
where the last inequality holds because $|\inn{u_i, v_j}| \geq \gamma$ or
$|\inn{u'_i, v'_j}| \geq \gamma$. Therefore, $\cP \vee \cQ \in \SMAR$.
\end{proof}

\subsection{Open Problems}\label{sec:open-problems}

We have added some classes to the hierarchy in \cref{fig:hierarchy}, which
suggests some new open problems. Recall that \cref{conj:intro-main} asks to
separate $\BPP_0$ from $\UPP_0$ (\ie show $\BPP_0 \setminus \UPP_0 \neq
\emptyset$). The introduction of support-rank suggests an intermediate problem:

\begin{question}
    Is $\BPP_0 \setminus \P_0^{\SUPP} \neq \emptyset$? (Is there a problem with
    constant bounded-error randomized cost, which cannot be reduced to any
    problem of constant support-rank?)
\end{question}

Our results show that all examples in $\BPP_0$ known up to \cite{FGHH25} are
also contained in $\P_0^{\SUPP}$. This leaves the recent examples of \cite{SS24}
as promising candidates for proving this separation.

Another question suggested by new classes in \cref{fig:hierarchy} is whether $\SMAR \cap \coSMAR = \Gamma_2$. We refer to \cite{HH24} for many other open problems regarding the complexity classes
in \cref{fig:hierarchy}, and more. Let us repeat two of the most interesting, originally from \cite{HHH22}:

\begin{question}
    Is  $\BPP_0 = \P_0^{\RP}$?
\end{question}

\begin{question}
    Is $\P_0^\EQ = \Gamma_2$?
\end{question}

\medskip
\subsection*{Acknowledgments}
We thank Arkadev Chattopadhyay for pointing out the reference~\cite{HP2010} and
Kaave Hosseini for discussions. Thanks to Tsun-Ming Cheung and Viktor Zamaraev
for pointing out several typos and a mistake in the proof of
\cref{thm:rank-problem-reduction}. All authors are supported by the Swiss State
Secretariat for Education, Research, and Innovation (SERI) under contract number
MB22.00026.

\medskip

% -----------------------------------------------
\DeclareUrlCommand{\Doi}{\urlstyle{sf}}
\renewcommand{\path}[1]{\small\Doi{#1}}
\renewcommand{\url}[1]{\href{#1}{\small\Doi{#1}}}
\bibliographystyle{alphaurl}
\bibliography{references.bib}

\end{document}